\documentclass[preprint,3p,times,twocolumn]{elsarticle}
\usepackage{amssymb}
\usepackage{amsthm}
\usepackage{feynmp}
\usepackage{amsmath}
\usepackage{caption}
\usepackage{subcaption}
\usepackage{listings}

\DeclareGraphicsRule{*}{mps}{*}{}
\lstnewenvironment{code}
    {\lstset{%
        basicstyle=\small\ttfamily,
        escapechar={\#}, 
        columns=fullflexible
    }}
    {}

\newtheorem{mydef}{Definition}
\newtheorem{mytheorem}{Theorem}
\newtheorem{mylemma}{Lemma}
\newtheorem{mycorollary}{Corollary}

\newcounter{bla}

\journal{Computer Physics Communications}

\begin{document}

\begin{frontmatter}

\title{Feynman graph generation and calculations in the Hopf algebra of Feynman graphs}

\author[a]{Michael Borinsky\corref{author}}

\cortext[author] {Corresponding author.\\\textit{E-mail address:} borinsky@physik.hu-berlin.de}
\address[a]{Institutes of Physics and Mathematics\\
Humboldt-Universit\"at zu Berlin\\
Unter den Linden 6\\
10099 Berlin, Germany}

\begin{abstract}
Two programs for the computation of perturbative expansions of quantum field theory amplitudes are provided. 
\textbf{feyngen} can be used to generate Feynman graphs for Yang-Mills, QED and $\varphi^k$ theories. 
Using dedicated graph theoretic tools \textbf{feyngen} can generate graphs of comparatively high loop orders.
\textbf{feyncop} implements the Hopf algebra of those Feynman graphs which incorporates the renormalization 
procedure necessary to calculate finite results in perturbation theory 
of the underlying quantum field theory. \textbf{feyngen} is validated by comparison to explicit calculations 
of zero dimensional quantum field theories and \textbf{feyncop} is validated using a combinatorial identity on 
the Hopf algebra of graphs.
\end{abstract}

\begin{keyword}
Quantum Field Theory \sep Feynman graphs \sep Feynman diagrams \sep Hopf algebra \sep Renormalization \sep BPHZ
\end{keyword}

\end{frontmatter}

{\bf PROGRAM SUMMARY}

\begin{small}
\noindent
{\em Manuscript Title: Feynman graph generation and calculations in the Hopf algebra of Feynman graphs}                                       \\
{\em Author: Michael Borinsky}                                                \\
{\em Program Title: \textbf{feyngen}, \textbf{feyncop}
 \footnote{Both programs can be obtained from \texttt{https://github.com/michibo/feyncop}} 
}                                          \\
{\em Journal Reference:}                                      \\
{\em Catalogue identifier:}                                   \\
{\em Licensing provisions:none}                                   \\
{\em Programming language: Python}                                   \\
{\em Computer: PC}                                               \\
{\em Operating system: Unix, GNU/Linux}                                       \\
{\em RAM: 64m} bytes                                              \\
{\em Number of processors used: 1}                              \\
{\em Keywords: Quantum Field Theory, Feynman graphs, Feynman diagrams, Hopf algebra, Renormalization, BPHZ} \\
{\em Classification: 4.4 Feynman diagrams}                                         \\
{\em External routines/libraries: nauty [1]}                            \\
{\em Subprograms used: geng, multig (part of the nauty package)}                                       \\
\\
{\em Nature of problem:
Performing explicit calculations in quantum field theory Feynman graphs are indispensable. 

Infinities arising in the perturbative calculations make renormalization necessary. 
On a combinatorial level renormalization can be encoded using a Hopf algebra [2] whose 
coproduct incorporates the BPHZ procedure. 

Upcoming techniques initiated an interest in relatively large loop order Feynman diagrams 
which are not accessible by traditional tools.
}\\
   \\
{\em Solution method:
Both programs use the established \textbf{nauty} package to ensure high performance graph generation at high loop orders. 

\textbf{feyngen} is capable of generating $\varphi^k$-theory, QED and Yang-Mills Feynman graphs and of filtering these graphs for the properties of connectedness, one-particle-irreducibleness, $2$-vertex-connectivity and tadpole-freeness. It can handle graphs with fixed external legs as well as those without fixed external legs. 

\textbf{feyncop} uses basic graph theoretical algorithms to compute the coproduct of graphs encoding their Hopf algebra structure.
}\\
   \\
{\em Running time:
All $130516$ 1PI, $\varphi^4$, $8$-loop diagrams with four external legs can 
be generated, together with their symmetry factor, by \textbf{feyngen} within eight hours and all $342430$ 1PI, QED, vertex residue type, $6$-loop diagrams can be generated in three days both on a standard end-user PC. 
\textbf{feyncop} can calculate the coproduct of all $2346$ 1PI, $\varphi^4$, $8$-loop diagrams with four external legs within ten minutes.
}\\
   \\

\end{small}
\section{Introduction}
The purpose of this paper is to provide two tools for the computation of perturbative expansions of quantum field theory amplitudes.
These expansions typically come in terms of Feynman graphs, each of them corresponding to a term in the expansion.
These graphs carry Hopf algebra structures on them using the partial order provided by subgraphs. In particular the Hopf algebra structure provided by superficially divergent graphs is needed for the process of renormalization which is indispensable for the perturbative evaluation of a finite amplitude in accordance with renormalized Feynman rules.

All singularities appear in the integrand which is assigned to a Feynman graph by the Feynman rules. These integrands  can be studied through the two Symanzik polynomials and the corolla polynomial for the case of gauge theories \cite{BrownKreimer,KreimerSarsvanSuijlekom}. Short distance singularities which need to be eliminated by renormalization correspond to zeros in the first Symanzik polynomial. 
This basic fact underlies the utility of Hopf algebras in the study of quantum field theory.

The two tools provided here hence are a graph generator for Feynman graphs, \textbf{feyngen} and a routine which automates the Hopf algebra structure of those graphs, \textbf{feyncop}. 

Additionally to the study of perturbative quantum field theories and their renormalization in general, these programs are
aimed to be used as input for systematic parametric integration techniques to 
evaluate Feynman amplitudes \cite{BrownKreimer,browntwopoint}.
These upcoming techniques 
initiated an interest in relatively large loop order 
Feynman diagrams. 
Traditional programs like \textbf{QGRAF} \cite{Nogueira1993279} are designed to generate low loop order diagrams and are thereby insufficient for these applications.
Therefore, both programs use the established 
\textbf{nauty} package, described in \cite{McKay81practicalgraph}, to ensure high performance graph generation at high loop orders. 

\textbf{feyngen} is capable of generating $\varphi^k$-theory, 
QED and Yang-Mills Feynman graphs and of filtering these graphs for the 
properties of connectedness, one-particle-irreducibleness, 
$2$-vertex-connectivity and tadpole-freeness. 
It can handle graphs with fixed external legs as 
well as those without fixed external legs. 

This paper is organized as follows:
An introduction to the properties of the Hopf algebra of Feynman graphs is given in sections \ref{chap:basic} and \ref{chap:hopf_algebra}. 
These properties were used to derive theorem \ref{prop:sum_formula}, an identity on the Hopf algebra, suitable to validate the coproduct computation of \textbf{feyncop}. 
The combinatorial proof of this theorem constitutes a simplification of the proof given in \cite{suijlekom2007ren}.
In section \ref{chap:diagram_gen} details to the implementation and validation of \textbf{feyngen} are laid out and in section \ref{chap:cop_comp} the implementation of \textbf{feyncop} is described.

The manuals of the Feynman graph generation program \textbf{feyngen} and the coproduct computation program \textbf{feyncop} are given in
sections \ref{sec:manual_feyngen} and \ref{sec:manual_feyncop}. Furthermore, conclusions are drawn and some further prospects are outlined.

\section{Feynman graphs}
\label{chap:basic}
Viewed from a graph theoretical point, Feynman graphs are edge-colored multigraphs. Some properties and notions for multigraphs are reviewed in \ref{sec:graph_basic}.
For the treatment of the Hopf algebra of Feynman graphs, a Taylor-made definition 
as in \cite{KreimerSarsvanSuijlekom} is convenient.
\subsection{Definition}
Let $G = \left( V, E \right)$ be a multigraph with the 
vertex set $V$ and edge multiset $E$ being disjoint unions of the 
sets and multisets $\Gamma^{[0]}_\text{int}$, $\Gamma^{[0]}_\text{ext}$, 
$\Gamma^{[1]}_\text{int}$ and $\Gamma^{[1]}_\text{ext}$, such that
$$V = \Gamma^{[0]}_\text{int} \cup \Gamma^{[0]}_\text{ext}$$
and
$$E = \Gamma^{[1]}_\text{int} \cup \Gamma^{[1]}_\text{ext}.$$

The vertices in $\Gamma^{[0]}_\text{ext}$ shall have valency $1$ and 
those in $\Gamma^{[0]}_\text{int}$ valency $\ge3$.
The edges in $\Gamma^{[1]}_\text{int}$ must be incident only to vertices $v \in \Gamma^{[0]}_\text{int}$ and 
those in $\Gamma^{[1]}_\text{ext}$ are incident to at least one vertex $v \in \Gamma^{[0]}_\text{ext}$.
The vertices in  $\Gamma^{[0]}_\text{int}$ are called internal vertices and those in $\Gamma^{[0]}_\text{ext}$ 
are called external or source vertices.
The edges in $\Gamma^{[1]}_\text{int}$ are called internal edges and the ones in 
$\Gamma^{[1]}_\text{ext}$ are called external edges or legs.
\begin{mydef}%[Feynman graph]
\label{def_feynman_graph}
A Feynman graph $\Gamma = \left( G, \text{\normalfont{res}} \right)$ is a pair
of a multigraph $G$ with the above properties and a coloring $\text{\normalfont{res}}$, a map
\begin{align}
   \text{\normalfont{res}} : \Gamma^{[1]}_\text{int} \cup \Gamma^{[1]}_\text{ext} \rightarrow \mathcal{R}_E,
\end{align}
which assigns a color or type from a set of allowed edge types $\mathcal{R}_E$ to every edge in $G$. 

The map $\text{\normalfont{res}}$ can be extended to the internal vertices of $\Gamma$,
\begin{align}
\text{\normalfont{res}} : \Gamma^{[0]}_\text{int}\cup\Gamma^{[1]}_\text{int} \cup \Gamma^{[1]}_\text{ext}
\rightarrow \mathcal{R}_V \cup \mathcal{R}_E,
\end{align}
by assigning a vertex type $r_V \in \mathcal{R}_V$ to every internal vertex $v \in \Gamma^{[0]}_\text{int}$, such that
the vertex types are determined uniquely by the edges incident to $v$. The elements 
$r \in \mathcal{R}_V \cup \mathcal{R}_E$ are called the 
allowed residue types of the theory under inspection. Note that internal vertices can be promoted to corollas of equal valence 
by dividing adjacent internal or external
edges into suitable half-edges and assigning a corolla of valence one to an external vertex.

Generally, the Feynman rules restrict the sets $\mathcal{R}_V$ and $\mathcal{R}_E$, such that
only Feynman graphs with certain vertex and edge types are allowed.
For instance, $\varphi^4$ theory has 
one vertex type,
$\mathcal{R}_V^{\varphi^4} = \left\{
\parbox[c][10pt][t]{12.5pt}{\centering
        \begin{fmffile}{phi4_vtx}
        \begin{fmfgraph}(10,10)
\end{fmfgraph}
        \end{fmffile}}
\right\}$,
and one edge type,
$\mathcal{R}_E^{\varphi^4} = \left\{
\parbox[c][10pt][t]{12.5pt}{\centering
        \begin{fmffile}{phi4_edge}
        \begin{fmfgraph}(10,10)
\end{fmfgraph}
        \end{fmffile}}\right\}$,
whereas quantum electro dynamics
(QED) allows one vertex type, 
$\mathcal{R}_V^{\text{QED}} = \left\{
\parbox[c][15pt][t]{17.0pt}{\centering
        \begin{fmffile}{qed3_vtx}
        \begin{fmfgraph}(15,15)
\end{fmfgraph}
        \end{fmffile}}\right\}$,
and two edge types, 
$\mathcal{R}_E^{\text{QED}} = \left\{
\parbox[c][10pt][t]{12.5pt}{\centering
        \begin{fmffile}{qed_edge1}
        \begin{fmfgraph}(10,10)
\end{fmfgraph}
        \end{fmffile}},
\parbox[c][10pt][t]{12.5pt}{\centering
        \begin{fmffile}{qed_edge2}
        \begin{fmfgraph}(10,10)
\end{fmfgraph}
        \end{fmffile}}
\right\}$.
\end{mydef}
This definition differs slightly from the one in \cite{Manchon} and \cite{KreimerSarsvanSuijlekom}, 
because of the additional external vertices in $\Gamma^{[0]}_\text{ext}$. 
These external or source vertices are 
added for simplicity of the representation of graphs as edge lists, 
for the determination of the isomorphism class of a graph and for the 
transition from multigraphs to simple graphs needed as input for the \textbf{nauty} package.

In the following, Feynman graphs will be referred to as graphs or diagrams 
if the distinction from other types of graphs is clear from the context.

\paragraph{Diagrams with fixed external legs}
Usually when handling Green functions in quantum field theory, the 
external edges and the source vertices of graphs are considered  fixed. 
That means, there is an additional bijective map,
\begin{align*}
\delta : \Gamma^{[0]}_\text{ext} \rightarrow \left\{1, \ldots, |\Gamma^{[0]}_\text{ext}|\right\},
\end{align*}
associated to a graph $\Gamma$, giving a unique numbering of the external vertices. 
This map also induces a unique numbering on the external edges, because 
every external vertex is incident to at least one external edge.
A graph with fixed external edges and vertices is referred to as leg-fixed graph and 
one without fixed external edges and vertices as non-leg-fixed graph.
\paragraph{Isomorphism classes}
The concepts of adjacency, incidence and connectedness apply 
transparently to Feynman graphs using the appropriate properties 
of the underlying multigraph, care must be taken though considering isomorphic classes 
of Feynman graphs.
Two graphs $\Gamma=(G,\text{res})$ and $\Gamma'=(G',\text{res}')$ are isomorphic if there is an isomorphism $\phi$ 
between the two underlying multigraphs $\phi: G \rightarrow G'$, such that $\phi$ preserves the 
edge and vertex types: $\text{res} \circ \phi = \text{res}'$.

For leg-fixed graphs, the isomorphism $\phi$ also needs to preserve the 
numbering of the legs: $\delta \circ \phi = \delta'$.

Figure \ref{fig:fg_props_example} depicts some examples of Feynman 
graphs with the corresponding orders of the automorphism groups for 
the leg-fixed (lf) and the non-leg-fixed (nlf) case.
\paragraph{Subgraphs of Feynman graphs}
A Feynman graph $\gamma$ is a subgraph of a graph $\Gamma$, $\gamma \subseteq \Gamma$, if 
$\gamma^{[1]}_\text{int} \subseteq \Gamma^{[1]}_\text{int}$, $\gamma^{[0]}_\text{int} \subseteq \Gamma^{[0]}_\text{int}$
and every vertex $v \in \gamma^{[0]}_\text{int} \subseteq \Gamma^{[0]}_\text{int}$ has the same 
vertex type in $\gamma$ as in $\Gamma$: $\text{res}_\gamma(v) = \text{res}_\Gamma(v)$. 
That means, possible deficiencies of incident edges in a
subgraph are fixed by adding additional external legs.
Note that a subgraph is given uniquely by its internal edges and vertices. 
The external edges and vertices can be reconstructed from missing 
edges incident to a vertex to fulfill the vertex type requirement.

A canonical ordering of the external legs of subgraphs
cannot be defined easily. Therefore, subgraphs are 
considered as non-leg-fixed graphs in the scope of this thesis.
Bose symmetry ensures though that in the sum of all graphs all possible orderings will appear, and 
a canonical ordering is not needed in physical contexts.
\subsection{Properties of Feynman graphs}
\begin{figure}
\begin{subfigure}[t]{0.22\textwidth}\centering
 \parbox[c][55pt][t]{50pt}{\centering
    \begin{fmffile}{disconnected_phi4_example_fg}
    \begin{fmfgraph}(50,50)
\end{fmfgraph}
    \end{fmffile}
    }\\
    $|\text{Aut}_\text{lf}| = 2$\\
    $|\text{Aut}_\text{nlf}| = 8$
    \caption{A disconnected graph with a self-loop (i.e. a tadpole graph).}
    \label{subfig:disconnected_with_selfloop}
\end{subfigure}
~~
\begin{subfigure}[t]{0.22\textwidth}\centering
 \parbox[c][55pt][t]{50pt}{\centering
    \begin{fmffile}{1pi_phi3_example_fg}
    \begin{fmfgraph}(50,50)
\end{fmfgraph}
    \end{fmffile}
    }\\
    $|\text{Aut}_\text{lf}| = 4$\\
    $|\text{Aut}_\text{nlf}| = 32$
    \caption{A 1PI, but not $2$-vertex connected and not tadpole graph.}
    \label{subfig:1pi_wo_vtx2connted}
\end{subfigure}\\
\begin{subfigure}[t]{0.22\textwidth}\centering
 \parbox[c][55pt][t]{50pt}{\centering
    \begin{fmffile}{2vtx_connected_example_fg}
    \begin{fmfgraph}(50,50)
\end{fmfgraph}
    \end{fmffile}
    }\\
    $|\text{Aut}_\text{lf}| = 4$\\
    $|\text{Aut}_\text{nlf}| = 16$
    \caption{A 2-vertex-connected graph.}
    \label{subfig:vtx_2_connected_graph}
\end{subfigure}
~~
\begin{subfigure}[t]{0.22\textwidth}\centering
 \parbox[c][55pt][t]{50pt}{\centering
    \begin{fmffile}{tadpole_phi3_example_fg}
    \begin{fmfgraph}(50,50)
\end{fmfgraph}
    \end{fmffile}
    }\\
    $|\text{Aut}_\text{lf}| = 12$\\
    $|\text{Aut}_\text{nlf}| = 24$
    \caption{A 1PI tadpole graph without self-loops.}
    \label{subfig:tadpole_wo_selfloop}
\end{subfigure}
\caption{Examples of Feynman graphs, which fulfill certain properties, with the orders of their automorphism groups.}
\label{fig:fg_props_example}
\end{figure}
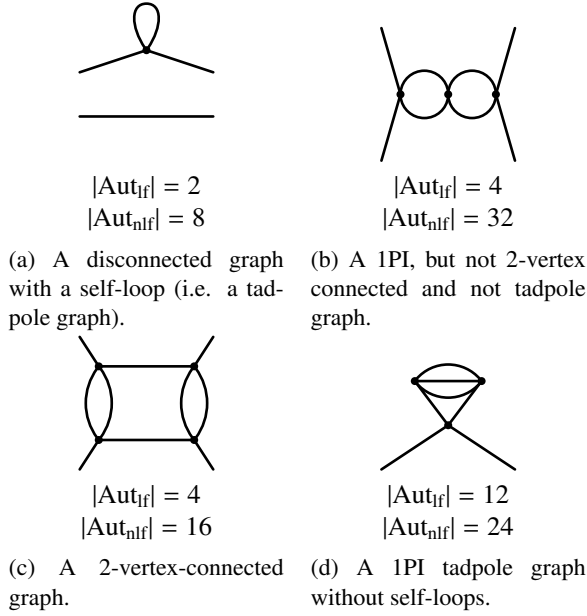
\paragraph{1PI (one particle irreducibleness)}
A graph, which is still connected if any internal edge is removed, is called a 1PI graph. 

Figure \ref{fig:fg_props_example} (\subref{subfig:1pi_wo_vtx2connted}), \ref{fig:fg_props_example} (\subref{subfig:vtx_2_connected_graph}) and \ref{fig:fg_props_example} (\subref{subfig:tadpole_wo_selfloop}) show examples of 1PI graphs. 
In figure \ref{fig:fg_props_example} (\subref{subfig:disconnected_with_selfloop}) a disconnected graph is depicted, which is thereby not 1PI.
\paragraph{$2$-vertex-connectivity}
A graph which has no self-loops and is still connected if any vertex $v$ is removed 
together with the edges incident to $v$, is called $2$-vertex-connected.
In figure \ref{fig:fg_props_example} (\subref{subfig:vtx_2_connected_graph}) a $2$-vertex-connected graph is shown. The other graphs in figure \ref{fig:fg_props_example} are not $2$-vertex-connected.
\paragraph{Tadpole graphs}
A not $2$-vertex-connected graph $\Gamma$, which either has self-loops or can be split into 
connected components upon removal of any vertex, such that one of the new connected 
components is not connected to an external vertex, is called a tadpole graph.
For graphs without external legs, this notion coincides with $2$-vertex connectivity.
Figure \ref{fig:fg_props_example} (\subref{subfig:disconnected_with_selfloop}) depicts a disconnected tadpole graph and 
figure \ref{fig:fg_props_example} (\subref{subfig:tadpole_wo_selfloop}) shows an example of a tadpole graph without self-loops.
\subsection{Contractions of subgraphs}
\label{subsec:contractions}
An important notion in connection to the Hopf algebra of Feynman graphs 
is the contraction of subgraphs. 
If $\gamma \subseteq \Gamma$ is a union of disjoint one-particle irreducible subgraphs $\gamma0\cup\gamma_i$, the contraction, $\Gamma/\gamma$, 
of $\gamma$ in $\Gamma$ is obtained by removing all edges 
$e \in \gamma^{[1]}_\text{int} \subseteq \Gamma^{[1]}_\text{int}$ 
of $\Gamma$ and by merging all vertices connected to these edges to 
one new vertex for each subgraph $\gamma_i$. If the new vertex is of valency $2$, it is also 
removed and the two edges incident to it are joined to one edge. This can be modified in need of separate mass and kinetic renormalization for massive propagators.

Contractions are commutative operations if disjoint subgraphs are contracted. 
For a graph $\Gamma$ with $\gamma_1, \gamma_2 \subseteq \Gamma$ and 
$\gamma_1 \cap \gamma_2 = \emptyset$:
\begin{align}
    \left(\Gamma / \gamma_1\right) / \gamma_2 = \left(\Gamma / \gamma_2\right) / \gamma_1
    = \Gamma / \left(\gamma_1 \cup \gamma_2\right).
\end{align}
Furthermore, contractions can be ``canceled'', for instance 
for $\delta \subseteq \gamma \subseteq \Gamma$ 
\begin{align}
\left(\Gamma/\delta\right) / \left(\gamma/\delta\right) = \Gamma / \gamma.
\end{align}

\subsection{Residues of graphs}
Using the notion of contractions, the map $\text{res}$ can be extended to act on connected Feynman graphs $\Gamma$. 
The maximal contraction, $\Gamma/\Gamma$, is formed, which contracts 
all internal edges of $\Gamma$, such that only 
one internal vertex or an external edge is left. 
$\text{res}(\Gamma)$ denotes the vertex or edge type of the left over vertex or edge.
This is called the residue type of $\Gamma$. 
\paragraph{Examples}
\begin{align*}
&
\text{res} \left( 
        \parbox[c][25pt][t]{50pt}{\centering
   \begin{fmffile}{qed_2l_1pi_prop_omega_example}
    \begin{fmfgraph}(50,25)
\end{fmfgraph}
    \end{fmffile}}
        \right) = \parbox[c][10pt][t]{12.5pt}{\centering
        \begin{fmffile}{qed_edge2}
        \begin{fmfgraph}(10,10)
\end{fmfgraph}
        \end{fmffile}} & 
&\text{res} \left(
    \parbox[c][25pt][t]{25pt}{\centering
   \begin{fmffile}{qed_1pi_1l_omega_example1}
    \begin{fmfgraph}(25,25)
\end{fmfgraph}
    \end{fmffile}}
\right) = \parbox[c][15pt][t]{17.0pt}{\centering
        \begin{fmffile}{qed3_vtx}
        \begin{fmfgraph}(15,15)
\end{fmfgraph}
        \end{fmffile}} \\
&\text{res} \left(
        \parbox[c][25pt][t]{50pt}{\centering
   \begin{fmffile}{phi3_1l_1pi_omega_example}
    \begin{fmfgraph}(50,25)
\end{fmfgraph}
    \end{fmffile}}
      \right)= 
 \parbox[c][10pt][t]{12.5pt}{\centering
        \begin{fmffile}{phi4_edge_example}
        \begin{fmfgraph}(10,10)
\end{fmfgraph}
        \end{fmffile}} &
&\text{res} \left(
\parbox[c][25pt][t]{10pt}{\centering
    \begin{fmffile}{phi4_1loop_omega_example}
    \begin{fmfgraph}(10,25)
\end{fmfgraph}
    \end{fmffile}}
\right) = \parbox[c][10pt][t]{12.5pt}{\centering
        \begin{fmffile}{phi4_vtx}
        \begin{fmfgraph}(10,10)
\end{fmfgraph}
        \end{fmffile}}
\end{align*}

\subsection{Weight $\omega_D$ of a Feynman graph}
\paragraph{Vertex and edge weights}
The Feynman rules give rise to a weight for every vertex and edge type:
\begin{align}
\omega : \mathcal{R}_V \cup \mathcal{R}_E \rightarrow \mathbb{Z}.
\end{align}
This weight corresponds to the negative power of the momenta in the associated Feynman rule.

In QED for example, a weight of $2$ is assigned to photon edges,
$\omega\left(\parbox[c][10pt][t]{12.5pt}{\centering
        \begin{fmffile}{qed_edge2}
        \begin{fmfgraph}(10,10)
\end{fmfgraph}
        \end{fmffile}} \right)=2$,
and $1$ is assigned to fermion edges, 
$\omega\left(\parbox[c][10pt][t]{12.5pt}{\centering
        \begin{fmffile}{qed_edge1}
        \begin{fmfgraph}(10,10)
\end{fmfgraph}
        \end{fmffile}}\right)=1$.
Because QED vertices do not depend on any momenta, their weight is $0$, 
$\omega\left(
\parbox[c][15pt][t]{17.0pt}{\centering
        \begin{fmffile}{qed3_vtx}
        \begin{fmfgraph}(15,15)
\end{fmfgraph}
        \end{fmffile}}\right)=0$.

In $\varphi^k$-theory all edges are assigned the weight $2$, $\omega\left(
\parbox[c][10pt][t]{12.5pt}{\centering
        \begin{fmffile}{phi3_edge}
        \begin{fmfgraph}(10,10)
\end{fmfgraph}
        \end{fmffile}}\right)=2$ and the vertices have weight $0$.

\paragraph{The map $\omega_D$ on Feynman graphs}
Using the map $\omega$ for given Feynman rules to assign 
a weight to every vertex and edge type, an additional map $\omega_D$ can 
be defined, giving a weight to a Feynman graph:
\begin{mydef}
\label{def_omega_D}
\begin{align}
    \omega_D\left(\Gamma\right) &:= \sum \limits_{v\in \Gamma^{[0]}_\text{int}} \omega(\text{\normalfont{res}}(v)) + \sum \limits_{e\in \Gamma^{[1]}_\text{int}} \omega(\text{\normalfont{res}}(e))  - D h_1 \left(\Gamma\right)
\end{align}
where $h_1(\Gamma)$ denotes the loop number of $\Gamma$ (the first Betti number)  and $D$ is 
a parameter that will be associated with the spacetime dimension.
Neglecting possible infrared divergences, the value of $\omega_D$ coincides
with the degree of divergence of the integral associated to the graph 
in a $D$-dimensional quantum field theory.
To the empty graph the weight $0$ is assigned: $\omega_D\left(\emptyset\right) = 0$. 
A 1PI graph $\Gamma$ with $\omega_D(\Gamma) \leq 0$ is called superficially divergent in $D$ dimensions.
\end{mydef}
\paragraph{Examples}
\begin{gather*}
\omega_4 \left( 
        \parbox[c][25pt][t]{50pt}{\centering
   \begin{fmffile}{qed_2l_1pi_prop_omega_example}
    \begin{fmfgraph}(50,25)
\end{fmfgraph}
    \end{fmffile}}
        \right) = 
\omega_6 \left( 
        \parbox[c][25pt][t]{50pt}{\centering
   \begin{fmffile}{phi3_2l_1pi_prop_omega_example}
    \begin{fmfgraph}(50,25)
\end{fmfgraph}
    \end{fmffile}}
    \right) = -2 \\
\omega_4 \left(
    \parbox[c][25pt][t]{25pt}{\centering
   \begin{fmffile}{qed_1pi_1l_omega_example}
    \begin{fmfgraph}(25,25)
\end{fmfgraph}
    \end{fmffile}}
\right) = 
\omega_4 \left(
\parbox[c][25pt][t]{25pt}{\centering
    \begin{fmffile}{qed_1loop_omega_example2}
    \begin{fmfgraph}(25,25)
\end{fmfgraph}
    \end{fmffile}}
\right) = 
\omega_4 \left(
\parbox[c][25pt][t]{10pt}{\centering
    \begin{fmffile}{phi4_1loop_omega_example}
    \begin{fmfgraph}(10,25)
\end{fmfgraph}
    \end{fmffile}}
\right) = 
\omega_6 \left(
\parbox[c][25pt][t]{25pt}{\centering
    \begin{fmffile}{phi3_1loop_omega_example}
    \begin{fmfgraph}(25,25)
\end{fmfgraph}
    \end{fmffile}}
\right) =0\\
\omega_4 \left(
\parbox[c][25pt][t]{25pt}{\centering
    \begin{fmffile}{qed_1loop_omega_example3}
    \begin{fmfgraph}(25,25)
\end{fmfgraph}
    \end{fmffile}}
\right) = 
\omega_4 \left(
\parbox[c][25pt][t]{25pt}{\centering
    \begin{fmffile}{phi4_1loop_example2}
    \begin{fmfgraph}(25,25)
\end{fmfgraph}
    \end{fmffile}
    }
\right) = 
\omega_6 \left(
\parbox[c][25pt][t]{25pt}{\centering
    \begin{fmffile}{phi3_1loop_omega_example2}
    \begin{fmfgraph}(25,25)
\end{fmfgraph}
    \end{fmffile}}
\right) =2
\end{gather*}

\section{The Hopf algebra of Feynman graphs}
\label{chap:hopf_algebra}
\subsection{Definition}
\label{sec:def_hopf_algebra}
In this section a short definition of the Hopf algebra of Feynman graphs is given. 
For a detailed definition consult \cite{connes2000renormalization,hopf_feynman,Manchon}.

Let $\mathcal{T}$ be the set of all non-isomorphic 1PI graphs of a given 
quantum field theory, including the empty graph. $\mathcal{F}$ is the free commutative monoid 
generated by the elements in $\mathcal{T}$. The empty graph is associated with the neutral element 
$\mathbb{I} \in \mathcal{F}$.

Following \cite{hopf_feynman}, the Hopf algebra $\mathcal{H}_D$ is a vector space spanned by the 
elements of $\mathcal{F}$ with the multiplication $m$ on $\mathcal{F}$ extended to 
linear combinations of products of graphs.
Additionally, $\mathcal{H}_D$ is equipped with a linear map called the coproduct
$\Delta_D : \mathcal{H}_D \rightarrow \mathcal{H}_D \otimes \mathcal{H}_D$, which depends on the dimension $D$.
For 1PI graphs $\Gamma\in\mathcal{T}\subset\mathcal{H}_D$ it is defined as
\begin{mydef}
\label{def_cop}
\begin{align}
\label{eqn:def_cop}
    &\Delta_D \Gamma := 
    \sum \limits_{ \substack{ \gamma \unlhd \Gamma} }
        \gamma \otimes \Gamma/\gamma& &:& &\mathcal{T} \rightarrow \mathcal{H}_D \otimes \mathcal{H}_D
\end{align}
where
\begin{gather}
\begin{gathered}
\label{eqn:relation_unlhd}
\gamma \unlhd \Gamma \\ 
\Leftrightarrow \\ 
 \gamma \in 
\left\{ \delta\subseteq \Gamma \left| \delta = 
\bigcup \limits_i \delta_i \text{, \normalfont{s.t.} }
\delta_i \in \mathcal{T} \right. 
\text{ \normalfont{and} } \omega_D(\delta_i) \le 0 \right\} 
\end{gathered}
\end{gather}
denotes the membership of $\gamma$ in the set of subgraphs of $\Gamma$, 
whose connected components are superficially divergent 1PI graphs. 
Disconnected graphs $\gamma=\bigcup \limits_i \gamma_i$ are identified 
with the product $\left(\prod \limits_i \gamma_i\right) \in \mathcal{F} \subset \mathcal{H}_D$.
The cograph $\Gamma/\Gamma$ and the empty graph $\gamma = \emptyset$ in the sum in 
\eqref{eqn:def_cop} are identified with $\mathbb{I} \in \mathcal{H}_D$.
\end{mydef}
Note, that while $\Gamma/\gamma$, the so-called cographs, in the sum in \eqref{eqn:def_cop} can inherit a numbering of the external edges 
from $\Gamma$, it is not possible to assign a numbering to the external legs of the subgraphs $\gamma$. 
That means, if the corresponding spaces of leg-fixed (lf) and non-leg-fixed (nlf) 1PI graphs, $\mathcal{T}^\text{lf}$, 
$\mathcal{T}^\text{nlf}$, ${\mathcal{H}_D}^\text{lf}$, ${\mathcal{H}_D}^\text{nlf}$ are 
distinguished, $\Delta_D$ maps the spaces as follows:
\begin{align*}
&\Delta_D& &:& &\mathcal{T}^\text{nlf}& &\rightarrow& &\mathcal{H}_D^\text{nlf}& &\otimes& &\mathcal{H}_D^\text{nlf} \\
&\Delta_D& &:& &\mathcal{T}^\text{lf}&     &\rightarrow& &\mathcal{H}_D^\text{nlf}& &\otimes& &\mathcal{H}_D^\text{lf}.
\end{align*}
Strictly speaking, only a coaction not a coproduct is obtained in the lf case. 
A coproduct can be obtained by defining a map $\mathcal{T}^\text{nlf} \rightarrow \mathcal{T}^\text{lf}$ 
which maps a nlf graph to the sum of all lf graphs which correspond to the nlf graph if 
the ordering of the external edges is ignored. Therefore, $\Delta_D$ can still
be promoted to a coproduct in the leg-fixed case.
Because here, the subgraphs themselves and not the algebra elements associated with them are of 
most interest, this ambiguity will not play a major role in this work. 
For most of the properties of $\mathcal{H}_D$, it is not relevant whether the 
set of leg-fixed or the one of non-leg-fixed 1PI graphs is taken as generators.
An exception is section \ref{sec:1pisum}, which will only cope with 
non-leg-fixed graphs for simplicity. 

The map $\Delta_D$ is an algebra morphism $\Delta_D \left(\Gamma_1 \Gamma_2\right) = 
\left( \Delta_D \Gamma_1 \right) \left( \Delta_D \Gamma_2 \right)$
In accordance to this, $\Delta_D \mathbb{I} := \mathbb{I} \otimes \mathbb{I}$.
Furthermore, $\Delta_D$ is a coassociative map: 
\begin{align}
    \left( \Delta_D \otimes \text{\normalfont{id}} \right) \Delta_D = 
    \left( \text{\normalfont{id}} \otimes \Delta_D \right) \Delta_D.
\end{align}

Another important notion is the reduced coproduct:
\begin{mydef}
\label{def:red_cop}
The reduced coproduct $\widetilde{\Delta}_D$ is defined as
\begin{align}
\label{eqn:def_red_cop}
    \widetilde{\Delta}_D := \Delta_D - \text{id} \otimes \mathbb{I} - \mathbb{I} \otimes \text{id}
    & &:& &\mathcal{H}_D \rightarrow \mathcal{H}_D \otimes \mathcal{H}_D.
\end{align}
\end{mydef}
It gives rise to the space of primitive elements of $\mathcal{H}_D$:
\begin{mydef}
\label{def:primitive}
\begin{align}
    \text{Prim}\left(\mathcal{H}_D\right) := \text{ker } \widetilde{\Delta}.
\end{align}
\end{mydef}

Furthermore, $\mathcal{H}_D$ is connected and graded by the loop number $h_1(\Gamma)$.

\subsection{Sum of the coproducts of all 1PI graphs}
\label{sec:1pisum}
Having established the basic properties 
of the Hopf algebra of Feynman graphs, an 
additional identity, suitable to test the 
functionality of \textbf{feyncop}, will be 
derived in this section. 
To do so, some further properties of Feynman graphs 
must be introduced. For simplicity, this section will only 
take non-leg-fixed graphs into account. 

\subsubsection{Numbers of vertices, edges and connected components of certain types}
For every residue type $r \in \mathcal{R}_V \cup \mathcal{R}_E$ and graph $\Gamma \in \mathcal{T}$,
\begin{align}
    m_r( \Gamma ) := \left| \left\{ \left. t \in \Gamma^{[0]}_\text{int} \cup \Gamma^{[1]}_\text{int} \right| \text{res}(t) = r \right\} \right|
\end{align}
gives the number of internal vertices or edges in $\Gamma$ of type $r$.

Suppose, $\gamma$ is a product of graphs,
$\gamma = \left(\prod \limits_i^n \gamma_i\right)\in \mathcal{F}$, then
\begin{align}
    n_r( \gamma ) := \left| \left\{ \left. \gamma_i\in\left\{ \gamma_1, \ldots, \gamma_n \right\} \right| \text{res}(\gamma_i) = r \right\} \right|
\end{align}
denotes the number of factors of $\gamma$ with residue $r$.

Let $\mathcal{R}_{E}'$ be the set of edge residue types taking 
different orientations into account. That means that for 
each oriented edge type in $r_E \in \mathcal{R}_E$ there are two edge types
in $\mathcal{R}_{E}'$ each corresponding to one of the possible orientations. 
Edge types without orientation appear once in $\mathcal{R}_{E}'$ as in $\mathcal{R}_{E}$.
For QED for instance this means ${\mathcal{R}'}_{E}^{\text{QED}} = \left\{
\parbox[c][10pt][t]{12.5pt}{\centering
        \begin{fmffile}{qed_edge1}
        \begin{fmfgraph}(10,10)
\end{fmfgraph}
        \end{fmffile}},
\parbox[c][10pt][t]{12.5pt}{\centering
        \begin{fmffile}{qed_edge1_dir}
        \begin{fmfgraph}(10,10)
\end{fmfgraph}
        \end{fmffile}},
\parbox[c][10pt][t]{12.5pt}{\centering
        \begin{fmffile}{qed_edge2}
        \begin{fmfgraph}(10,10)
\end{fmfgraph}
        \end{fmffile}}
\right\}$.

Given some edge type $r_E \in \mathcal{R}_{E}'$ and a vertex type $r_V \in \mathcal{R}_{V}$, 
\begin{align}
N_{r_E} \left(r_V\right)
\end{align}
is the number of edges of type $r_E$ incident to vertices of type $r_V$ in the given orientation.
For QED the values are,
\begin{align}
    N_{\parbox[c][10pt][t]{12.5pt}{\centering
        \begin{fmffile}{qed_edge1}
        \begin{fmfgraph}(10,10)
\end{fmfgraph}
        \end{fmffile}}} \left(\parbox[c][15pt][t]{17.0pt}{\centering
        \begin{fmffile}{qed3_vtx}
        \begin{fmfgraph}(15,15)
\end{fmfgraph}
        \end{fmffile}}\right)
= 
    N_{\parbox[c][10pt][t]{12.5pt}{\centering
        \begin{fmffile}{qed_edge1_dir}
        \begin{fmfgraph}(10,10)
\end{fmfgraph}
        \end{fmffile}}} \left(\parbox[c][15pt][t]{17.0pt}{\centering
        \begin{fmffile}{qed3_vtx}
        \begin{fmfgraph}(15,15)
\end{fmfgraph}
        \end{fmffile}}\right)
= 
    N_{\parbox[c][10pt][t]{12.5pt}{\centering
        \begin{fmffile}{qed_edge2}
        \begin{fmfgraph}(10,10)
\end{fmfgraph}
        \end{fmffile}}} \left(\parbox[c][15pt][t]{17.0pt}{\centering
        \begin{fmffile}{qed3_vtx}
        \begin{fmfgraph}(15,15)
\end{fmfgraph}
        \end{fmffile}}\right) = 1.
\end{align}
On the other hand for $\varphi^3$-theory, the only relevant value is
\begin{align}
    N_{\parbox[c][10pt][t]{12.5pt}{\centering
        \begin{fmffile}{phi3_edge}
        \begin{fmfgraph}(10,10)
\end{fmfgraph}
        \end{fmffile}}} \left(\parbox[c][10pt][t]{12.5pt}{\centering
        \begin{fmffile}{phi3_vtx}
        \begin{fmfgraph}(10,10)
\end{fmfgraph}
        \end{fmffile}}\right) = 3.
\end{align}

\subsubsection{Permuting external legs}
Let $\text{Perm}_\text{ext}(\Gamma)$ for $\Gamma \in \mathcal{T}$ be
the group of permutations of external vertices of $\Gamma$, 
preserving the external vertex types.
By permuting the external vertices of $\Gamma$ 
new graphs can be obtained. In the case of 
non-leg-fixed graphs, these new graphs will be
isomorphic to the original one. 
For $\Gamma \in \mathcal{T}$ the 
number of different permutations is given as 
\begin{align}
\prod \limits_{r_E \in \mathcal{R}_{E}'} N_{r_E}\left( \text{res}(\Gamma) \right)!.
\end{align}
For a product of graphs $\gamma = \left(\prod \limits_i \gamma_i \right) \in \mathcal{F}$ with some numbering of the factors chosen, 
$\text{Perm}_\text{ext}(\gamma)$ is defined as the product group:
\begin{align}
\text{Perm}_\text{ext}(\gamma) = \text{Perm}_\text{ext}(\gamma_1) \times \text{Perm}_\text{ext}(\gamma_2) \times \ldots
\end{align}
The cardinality of this group is 
\begin{align}
\left| \text{Perm}_\text{ext}(\gamma) \right| = 
\prod \limits_{r \in \mathcal{R}_{V} \cup \mathcal{R}_E} 
\left(
\prod \limits_{r_E \in \mathcal{R}_{E}'} N_{r_E}( r_V )!
\right) ^ {n_{r}(\gamma)}.
\end{align}

\subsubsection{Insertions}
Inserting a graph into another can be interpreted as 
an inverse to the operation of contracting a subgraph.
Suppose $\gamma,\Gamma \in \mathcal{T}$ and $\gamma$ should be inserted into $\Gamma$.
If $\gamma$ has an edge type residue, 
it can be glued in an internal edge $e\in\Gamma^{[1]}_\text{int}$ with 
$\text{res}(e) = \text{res}(\gamma)$ and if 
it has a vertex type residue, it can be inserted to 
replace an internal vertex $v\in\Gamma^{[0]}_\text{int}$ with $\text{res}(v) = \text{res}(\gamma)$.
This can be extended to products of graphs $\gamma = \left(\prod \limits_i \gamma_i\right) \in \mathcal{F}$ 
being inserted into $\Gamma\in\mathcal{T}$, by inserting 
all the graphs $\gamma_i$ subsequently into different 
vertices or edges of $\Gamma$.
Note that it is possible to insert multiple graphs of 
edge residue type
into the same edge of $\Gamma$, whereas it is not 
possible to insert more than one of vertex residue type
into the same vertex of $\Gamma$.

The set of insertions of $\gamma\in\mathcal{F}$ into $\Gamma\in\mathcal{T}$, 
yielding possibly isomorphic graphs, is denoted by $\mathcal{I}(\Gamma | \gamma )$.
The graph obtained, when $\gamma$ is inserted into $\Gamma$ using an 
insertion $i \in\mathcal{I}(\Gamma | \gamma )$, is denoted by
$\Gamma \circ_i \gamma \in \mathcal{T}$.

The cardinality of $\mathcal{I}(\Gamma | \gamma )$ is, 
\begin{align}
\begin{split}
\label{eqn:num_insertions}
    \left| \mathcal{I}(\Gamma | \gamma ) \right| &= \left| \text{Perm}_\text{ext}(\gamma) \right|
    \prod \limits_{r_V \in \mathcal{R}_V} n_{r_V}(\gamma)! \binom {m_{r_V}(\Gamma)} {n_{r_V}(\gamma)} \times \\
&\times
    \prod \limits_{r_E \in \mathcal{R}_E} n_{r_E}(\gamma)! \binom {m_{r_E}(\Gamma) + n_{r_E}(\gamma) - 1} {n_{r_E}(\gamma)} ,
\end{split}
\end{align}
where the first term describes the freedom to insert $\gamma$ into  
$\Gamma$ in all permutations of $\gamma$'s
external legs, the second term represents the number of choices of 
insertion places for the factors of vertex residue type and
the last term stands for the number of choices to insert factors
of edge type into $\Gamma$. The factorials count the number of ways 
to choose a suitable order of the factors to insert.
The difference in the vertex and edge insertions results from the fact that 
edges can be used for multiple insertions and vertices only for one.
The special cases are declared as
\begin{align}
\label{eqn:1pisum_special}
    \left| \mathcal{I}(\Gamma | \mathbb{I} ) \right| &=  \left| \mathcal{I}(\mathbb{I} | \Gamma ) \right| = \left| \mathcal{I}(\mathbb{I} | \mathbb{I} ) \right| = 1 & & \forall \Gamma \in \mathcal{T} \\
\notag
\text{and } \left| \mathcal{I}(\mathbb{I} | \gamma ) \right|&= 0 & &\forall \gamma \in \mathcal{F}, \gamma \notin \mathcal{T}.
\end{align}

\paragraph{Examples}
Suppose 
$ \Gamma = \gamma = \parbox[c][20pt][t]{20pt}{\centering
        \begin{fmffile}{insertion_example1}
        \begin{fmfgraph}(20,20)
\end{fmfgraph}
        \end{fmffile}}
$
and $\gamma$ shall be inserted into $\Gamma$. 
Dealing with a $\varphi^4$-theory only 
one vertex and one edge type need to be considered.
$\gamma$ is connected and of vertex residue type. Therefore, 
$n_{
\parbox[c][7pt][t]{9pt}{\centering
        \begin{fmffile}{phi4_vtx_insertion}
        \begin{fmfgraph}(7,7)
\end{fmfgraph}
        \end{fmffile}}
}(\gamma) = 1$ and 
$n_{
\parbox[c][7pt][t]{9pt}{\centering
        \begin{fmffile}{phi4_edge_insertion}
        \begin{fmfgraph}(7,7)
\end{fmfgraph}
        \end{fmffile}}
}(\gamma) = 0$. 
$\left| \text{Perm}_\text{ext} (\gamma) \right| = 4!=24$, because 
of the four external legs of $\gamma$ of similar type.

$\Gamma$ has two internal edges and two 
internal vertices. Consequently, $m_{
\parbox[c][7pt][t]{9pt}{\centering
        \begin{fmffile}{phi4_vtx_insertion}
        \begin{fmfgraph}(7,7)
\end{fmfgraph}
        \end{fmffile}}}(\Gamma) = 2$ and  
$m_{
\parbox[c][7pt][t]{9pt}{\centering
        \begin{fmffile}{phi4_edge_insertion}
        \begin{fmfgraph}(7,7)
\end{fmfgraph}
        \end{fmffile}}
}(\Gamma) = 2$.
The total number of insertions is then, according to formula \eqref{eqn:num_insertions},
\begin{align*}
\left| \mathcal{I}(\Gamma | \gamma ) \right| = 24 \cdot 1! \cdot \binom{2}{1} \cdot 0! \cdot \binom{2-1}{0} = 48.
\end{align*}

Only two non isomorphic graphs will be obtained if $\Gamma \circ_i \gamma$ is formed for all $i \in \mathcal{I}(\Gamma | \gamma )$. 
Considering the possible permutations of the external legs 
of $\gamma$, there will be $32$ insertions giving graphs 
of the same isomorphism class as
$\parbox[c][20pt][t]{20pt}{\centering
     \begin{fmffile}{insertion_example2}
        \begin{fmfgraph}(20,20)
\end{fmfgraph}
        \end{fmffile}}
$
and $16$ insertions will give graphs of the same isomorphism class as 
$
\parbox[c][20pt][t]{20pt}{\centering
     \begin{fmffile}{insertion_example3}
        \begin{fmfgraph}(20,20)
\end{fmfgraph}
        \end{fmffile}}.
$

If on the other hand the product of graphs 
$
    \gamma' =
\left(\parbox[c][22pt][t]{22pt}{\centering
    \begin{fmffile}{insertion_example4}
    \begin{fmfgraph}(20,20)
\end{fmfgraph}
    \end{fmffile}
    }\right)^2
$
shall be inserted into $\Gamma$, the edge insertion places of 
$\Gamma$ need to be considered.
Clearly, $n_{
\parbox[c][7pt][t]{9pt}{\centering
        \begin{fmffile}{phi4_vtx_insertion}
        \begin{fmfgraph}(7,7)
\end{fmfgraph}
        \end{fmffile}}
}(\gamma') = 0$,
$n_{
\parbox[c][7pt][t]{9pt}{\centering
        \begin{fmffile}{phi4_edge_insertion}
        \begin{fmfgraph}(7,7)
\end{fmfgraph}
        \end{fmffile}}
}(\gamma') = 2$ and 
$\left| \text{Perm}_\text{ext} (\gamma') \right| = \left(2!\right)^2=4$, 
because both factors of $\gamma'$ have two external legs of the same 
type.
Plugging into formula \eqref{eqn:num_insertions} yields
\begin{align*}
\left| \mathcal{I}(\Gamma | \gamma' ) \right| = 4 \cdot 0! \cdot \binom{2}{0} 
\cdot 2! \cdot \binom{2+2-1}{2} = 24.
\end{align*}
Forming $\Gamma \circ_i \gamma'$ for all $i\in\mathcal{I}( \Gamma | \gamma' )$ 
will yield $16$ graphs isomorphic to
$
    \parbox[c][20pt][t]{20pt}{\centering
    \begin{fmffile}{insertion_example5}
    \begin{fmfgraph}(20,20)
\end{fmfgraph}
    \end{fmffile}}
$
and $8$ graphs isomorphic to
$
    \parbox[c][20pt][t]{20pt}{\centering
    \begin{fmffile}{insertion_example6}
    \begin{fmfgraph}(20,20)
\end{fmfgraph}
    \end{fmffile}}.
$

\subsubsection{Automorphism group of products of graphs}
Guided by the association of products of graphs with disconnected graphs the 
notion of the automorphism group can be extended.
The extension to products of non-isomorphic graphs is trivial, because 
all factors can be distinguished:
\begin{align}
\begin{split}
\text{Aut}\left( \prod \limits_i \gamma_i \right) &= 
\text{Aut}\left( \gamma_1 \right) \times \text{Aut}\left( \gamma_2 \right) \times \ldots \\
 & \text{ with } \gamma_i \in \mathcal{T} \text{ and } \gamma_i \not\simeq \gamma_j ~~ \forall i,j 
\end{split}\\
\begin{split}
\Rightarrow \left|\text{Aut}\left( \prod \limits_i \gamma_i \right) \right| &= 
\prod \limits_i \left| \text{Aut}(\gamma_i) \right|  \\
 &\text{ with } \gamma_i \in \mathcal{T} \text{ and }\gamma_i \not\simeq \gamma_j ~~ \forall i,j
\end{split}
\end{align}
If the product contains more than one graph of the same isomorphism class, additional symmetry generators are obtained given by the permutations of the 
graphs of the same isomorphism class. Therefore, with given multiplicities $n_i$ for 
every isomorphism class the cardinality of the automorphism group is:
\begin{align}
\begin{split}
\left| \text{Aut}\left( \prod \limits_i \left( \gamma_i \right)^{n_i} \right) \right| &= 
\prod \limits_i \left( n_i! \left| \text{Aut}\left( \gamma_i \right) \right|^{n_i} \right) \\
 &\text{ with } \gamma_i \in \mathcal{T} \text{ and }\gamma_i \not\simeq \gamma_j ~~ \forall i,j.
\end{split}
\end{align}

\subsubsection{Sum formula for 1PI graphs}
\label{sec:finally_sum_formula}
Making use of the preceding definitions and properties of 
Feynman graphs an identity of 1PI graphs sums 
involving the coproduct is proved. 
The statement is also proved in \cite{suijlekom2007ren}. 
Here, a different argument is laid out based on the following lemma,
a variant of a theorem in \cite{connes2000renormalization}:
\begin{mylemma}
\label{lemma:ab}
For three given graphs $\gamma \in \mathcal{F}$, $\Gamma \in \mathcal{T}$ and $\widetilde\Gamma \in \mathcal{T}$, the set of 
pairs $(j_1,j_2)$ of an embedding $j_1$ of $\gamma$ into $\Gamma$ and an isomorphism $j_2$ between $\Gamma/j_1(\gamma)$ and $\widetilde\Gamma$, 
\begin{align}
    A(\gamma, \widetilde\Gamma, \Gamma) = 
    \left\{ \left(j_1, j_2 \right) \left| j_1 : \gamma \rightarrow \Gamma \text{ and }
    j_2 : \Gamma/j_1(\gamma) \rightarrow \widetilde\Gamma \right. \right\},
\end{align}
has the same cardinality as the set of all pairs $(i,j)$ of insertions $i \in \mathcal{I}(\widetilde\Gamma | \gamma)$ 
and isomorphisms $j$ between $\widetilde\Gamma \circ_i \gamma$ and $\Gamma$, 
\begin{align}
    B(\gamma, \widetilde\Gamma, \Gamma) = 
    \left\{ \left(i, j \right) \left| i \in \mathcal{I}(\widetilde\Gamma | \gamma) \text{ and }
    j : \widetilde\Gamma \circ_i \gamma \rightarrow \Gamma \right. \right\}.
\end{align}
\end{mylemma}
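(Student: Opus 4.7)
The plan is to construct an explicit bijection $\Phi : A(\gamma, \widetilde\Gamma, \Gamma) \to B(\gamma, \widetilde\Gamma, \Gamma)$ by exploiting the fact that insertion and contraction are inverse operations at the level of Feynman graphs. Both sets parametrise the same combinatorial object, namely a ``decorated'' way of identifying a subgraph of $\Gamma$ with $\gamma$ together with an identification of the corresponding cograph with $\widetilde\Gamma$, accessed once ``top-down'' (embed, then quotient) and once ``bottom-up'' (insert, then identify).

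Given $(j_1, j_2) \in A$, I would define $\Phi(j_1, j_2) = (i, j) \in B$ as follows. The embedding places a copy $j_1(\gamma) \subseteq \Gamma$, and by the definition in section \ref{subsec:contractions} each connected component $\gamma_k$ of $j_1(\gamma)$ collapses in $\Gamma/j_1(\gamma)$ to either a single internal vertex, or (if $\gamma_k$ has edge-type residue) to a single internal edge after the 2-valent vertex is removed and the two incident half-edges are joined. Transporting these distinguished loci through $j_2$ yields internal vertices and edges of $\widetilde\Gamma$ of the appropriate residue types. Together with the half-edge attachment data recorded by $j_1$ (which external half-edge of each $\gamma_k$ is glued to which local half-edge of the ambient graph), this is precisely the data of an insertion $i \in \mathcal{I}(\widetilde\Gamma | \gamma)$. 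The isomorphism $j : \widetilde\Gamma \circ_i \gamma \to \Gamma$ is then forced: it is $j_2^{-1}$ on the part inherited from $\widetilde\Gamma$ and $j_1$ on the inserted copy of $\gamma$, and these agree along the half-edges by construction of $i$.

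The inverse map $\Psi : B \to A$ sends $(i, j) \in B$ to the pair where $j_1$ is the composition of the canonical embedding $\gamma \hookrightarrow \widetilde\Gamma \circ_i \gamma$ given by the insertion with $j$, and $j_2 : \Gamma/j_1(\gamma) \to \widetilde\Gamma$ is the isomorphism induced by $j^{-1}$ combined with the canonical identification $(\widetilde\Gamma \circ_i \gamma)/\gamma = \widetilde\Gamma$ (a direct consequence of the cancellation property in section \ref{subsec:contractions}). Unwinding the definitions shows that $\Phi$ and $\Psi$ are mutually inverse, since in both directions the composite retraces the same matching of $\gamma$-sites between $\Gamma$ and $\widetilde\Gamma$.

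The main subtlety, and the only delicate step, is verifying that an insertion $i$ is specified by exactly the data produced by $\Phi$: a choice of insertion locations of matching residue type in $\widetilde\Gamma$, plus, for each connected component of $\gamma$, a bijection between its external half-edges and the local half-edges at the insertion site. The bijection hinges on the fact that contracting $j_1(\gamma)$ inside $\Gamma$ and then re-inserting $\gamma$ at the resulting locus recovers the original configuration, once the freedom of permuting external legs is taken into account (this is precisely the factor $|\mathrm{Perm}_\mathrm{ext}(\gamma)|$ appearing in equation \eqref{eqn:num_insertions}). This is exactly the content of the ``corolla of half-edges'' viewpoint mentioned after Definition \ref{def_feynman_graph}, and it is the only place where the half-edge bookkeeping of the Hopf algebra of Feynman graphs must be handled with care.
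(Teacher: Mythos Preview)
Your proposal is correct and follows essentially the same route as the paper: both construct an explicit bijection $A\to B$ by reading off from $(j_1,j_2)$ the insertion data in $\widetilde\Gamma$ via the contraction $\Gamma/j_1(\gamma)$ and the identification $j_2$, and recover $(j_1,j_2)$ from $(i,j)$ by restricting $j$ to the canonical copy of $\gamma$ inside $\widetilde\Gamma\circ_i\gamma$ and using $(\widetilde\Gamma\circ_i\gamma)/\gamma\simeq\widetilde\Gamma$. If anything, your write-up is slightly more explicit about the half-edge bookkeeping than the paper's own proof.
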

\begin{proof}
A bijection between the sets $A(\gamma, \widetilde\Gamma, \Gamma)$ 
and $B(\gamma, \widetilde\Gamma, \Gamma)$ is constructed. 

Given is a pair $(j_1, j_2)\in A(\gamma, \widetilde\Gamma, \Gamma)$. 
$j_1$ gives rise to a subgraph $j_1(\gamma) \subseteq \Gamma$.
This subgraph can be contracted to yield an insertion $i'\in\mathcal{I}\left(\Gamma/j_1(\gamma)|j_1(\gamma)\right)$.
Using $j_1$ and $j_2$ an insertion $i\in  \mathcal{I}\left(\widetilde\Gamma|\gamma\right)$ can be won. 
Clearly, there is an isomorphism $j:\widetilde\Gamma \circ_i \gamma \rightarrow \Gamma$, induced by $j_1$ and $j_2$
because $\left(\Gamma/j_1(\gamma)\right) \circ_{i'} j_1(\Gamma) = \Gamma$. 

This construction is reversible, because with these $(i,j)\in B(\gamma, \widetilde\Gamma, \Gamma)$, the isomorphism
$j : \widetilde\Gamma \circ_i \gamma \rightarrow \Gamma$ can be used to reconstruct $j_1$ 
by restricting it on $\gamma \subseteq \widetilde\Gamma \circ_i \gamma$. 
Contracting $\Gamma$ to $\Gamma/j_1(\gamma)$ and applying $j^{-1}$ to get 
$j^{-1}(\Gamma)/j^{-1}\left(j_1(\gamma)\right)$ gives $\widetilde\Gamma$. Therefore, $j_2$ is 
also retrieved.
This procedure is defined for all $(i,j)\in B(\gamma, \widetilde\Gamma, \Gamma)$ and establishes that 
this is a one-to-one.
\end{proof}
\begin{mycorollary}
\label{coro:pre_sum_formula}
For $\gamma \in \mathcal{F}$ and $\widetilde\Gamma \in \mathcal{T}$, 
\begin{align}
\label{eqn:formula_corollary}
 \frac{
 \left| \mathcal{I}(\widetilde\Gamma | \gamma ) \right| }{
 \left| \text{\normalfont{Aut}}(\gamma) \right| \left| \text{\normalfont{Aut}}(\widetilde\Gamma) \right|} = 
 \sum \limits_{\Gamma \in \mathcal{T}} \frac{ \left| \left\{ \gamma' \subseteq \Gamma \left| \gamma' \simeq \gamma \text{ \normalfont{and} } 
 \Gamma / \gamma' \simeq \widetilde\Gamma \right. \right\} 
 \right|}{\left| \text{\normalfont{Aut}}(\Gamma) \right| } .
\end{align}
\end{mycorollary}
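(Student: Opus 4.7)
The plan is to derive Corollary \ref{coro:pre_sum_formula} directly from Lemma \ref{lemma:ab} by counting the two sets $A(\gamma, \widetilde\Gamma, \Gamma)$ and $B(\gamma, \widetilde\Gamma, \Gamma)$ a second time, once in terms of isomorphism classes of subgraphs of $\Gamma$ and once in terms of insertions into $\widetilde\Gamma$, and then summing the resulting identity over $\Gamma \in \mathcal{T}$.

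First I would evaluate $|B(\gamma, \widetilde\Gamma, \Gamma)|$ by grouping the pairs $(i,j)$ according to $i$. For a fixed $i \in \mathcal{I}(\widetilde\Gamma | \gamma)$ the set of isomorphisms $j : \widetilde\Gamma \circ_i \gamma \to \Gamma$ is empty when $\widetilde\Gamma \circ_i \gamma \not\simeq \Gamma$ and otherwise a torsor under $\text{Aut}(\Gamma)$, so it has cardinality $|\text{Aut}(\Gamma)|$. Hence
\begin{equation*}
|B(\gamma, \widetilde\Gamma, \Gamma)| \;=\; |\text{Aut}(\Gamma)| \cdot \bigl|\{\, i \in \mathcal{I}(\widetilde\Gamma|\gamma) : \widetilde\Gamma \circ_i \gamma \simeq \Gamma \,\}\bigr|.
\end{equation*}

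Next I would evaluate $|A(\gamma, \widetilde\Gamma, \Gamma)|$ by grouping the pairs $(j_1, j_2)$ according to the image subgraph $\gamma' := j_1(\gamma) \subseteq \Gamma$. Any two embeddings of $\gamma$ with the same image differ by an element of $\text{Aut}(\gamma)$, so for each admissible $\gamma'$ there are exactly $|\text{Aut}(\gamma)|$ choices of $j_1$; and given such a $j_1$, the number of isomorphisms $j_2 : \Gamma/\gamma' \to \widetilde\Gamma$ is $|\text{Aut}(\widetilde\Gamma)|$ when $\Gamma/\gamma' \simeq \widetilde\Gamma$ and $0$ otherwise. Therefore
\begin{equation*}
|A(\gamma, \widetilde\Gamma, \Gamma)| \;=\; |\text{Aut}(\gamma)|\,|\text{Aut}(\widetilde\Gamma)| \cdot \bigl|\{\, \gamma' \subseteq \Gamma : \gamma' \simeq \gamma \text{ and } \Gamma/\gamma' \simeq \widetilde\Gamma \,\}\bigr|.
\end{equation*}
Equating these two expressions by Lemma \ref{lemma:ab}, dividing through by $|\text{Aut}(\gamma)|\,|\text{Aut}(\widetilde\Gamma)|\,|\text{Aut}(\Gamma)|$, and then summing over $\Gamma \in \mathcal{T}$ yields exactly the claim: on the left-hand side the sum over $\Gamma$ partitions $\mathcal{I}(\widetilde\Gamma | \gamma)$ by the isomorphism class $[\widetilde\Gamma \circ_i \gamma]$ and therefore collapses to $|\mathcal{I}(\widetilde\Gamma|\gamma)|/(|\text{Aut}(\gamma)|\,|\text{Aut}(\widetilde\Gamma)|)$, while the right-hand side is already in the desired form.

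The one point that needs care, and which I expect to be the main obstacle, is the claim that the number of embeddings $j_1 : \gamma \to \Gamma$ with prescribed image $\gamma'$ is precisely $|\text{Aut}(\gamma)|$ when $\gamma \in \mathcal{F}$ is a product of 1PI graphs. Here one has to interpret an embedding component-wise and verify that the $\text{Aut}(\gamma)$ introduced in the preceding subsection, which already incorporates the permutations of isomorphic factors, is exactly the stabilizer of $\gamma'$ acting simply transitively on such embeddings. Once this is in place, the rest is a routine orbit-counting manipulation and the sum over $\Gamma$ identity drops out.
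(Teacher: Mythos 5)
Your proof is correct and follows essentially the same route as the paper: it evaluates $|A(\gamma,\widetilde\Gamma,\Gamma)|$ and $|B(\gamma,\widetilde\Gamma,\Gamma)|$ via the same automorphism counts, equates them by Lemma \ref{lemma:ab}, divides by $\left|\text{Aut}(\gamma)\right|\left|\text{Aut}(\widetilde\Gamma)\right|\left|\text{Aut}(\Gamma)\right|$, and sums over $\Gamma\in\mathcal{T}$ using $\sum_{\Gamma}\left|\{i:\widetilde\Gamma\circ_i\gamma\simeq\Gamma\}\right|=\left|\mathcal{I}(\widetilde\Gamma|\gamma)\right|$. The point you flag about embeddings of a product $\gamma\in\mathcal{F}$ with prescribed image being counted by $\left|\text{Aut}(\gamma)\right|$ (including factor permutations) is exactly what the paper assumes implicitly, so your extra care there only makes the argument more explicit, not different.
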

\begin{proof}
Because the total number of automorphisms of a graph $\gamma$ is given by $\left| \text{Aut}(\gamma) \right|$
\begin{align}
  \begin{gathered}
    |A(\gamma, \widetilde\Gamma, \Gamma)| = \\ 
    \left| \text{Aut}(\gamma) \right| \left|  \text{Aut}(\widetilde\Gamma) \right| \left| \left\{ \gamma' \subseteq \Gamma \left| \gamma' \simeq \gamma \text{ and } 
 \Gamma / \gamma' \simeq \widetilde\Gamma \right. \right\} \right| \\
 \end{gathered}
\intertext{and}
    \begin{gathered}
    |B(\gamma, \widetilde\Gamma, \Gamma)| = \\ 
    \left| \text{Aut}(\Gamma) \right|
    \left| \left\{ 
    i \left| i \in \mathcal{I}(\widetilde\Gamma|\gamma) \text{ and } 
    \widetilde\Gamma \circ_i \gamma \simeq \Gamma \right.
    \right\} \right|.
  \end{gathered}
\end{align}
Using
\begin{align}
\sum \limits_{\Gamma \in \mathcal{T}} \left| \left\{ 
    i \left| i \in \mathcal{I}(\widetilde\Gamma|\gamma) \text{ and } 
    \widetilde\Gamma \circ_i \gamma \simeq \Gamma \right.
    \right\} \right| = \left| \mathcal{I}( \widetilde\Gamma|\gamma ) \right|,
\end{align}
and plugging in $\frac{|B(\gamma, \widetilde\Gamma, \Gamma)|}{\left| \text{Aut}(\Gamma)\right| }$ yields
\begin{gather}
\begin{gathered}
    \left| \mathcal{I}( \widetilde\Gamma|\gamma ) \right| = \sum \limits_{\Gamma \in \mathcal{T} } \frac{\left| B(\gamma, \widetilde\Gamma, \Gamma) \right|}
    {\left| \text{Aut}(\Gamma)\right|} = \\
    = \sum \limits_{\Gamma \in \mathcal{T} } \frac{\left| \text{Aut}(\gamma)\right| \left| \text{Aut}(\widetilde\Gamma)\right|\left| \left\{ \gamma' \subseteq \Gamma \left| \gamma' \simeq \gamma \text{ and } 
 \Gamma / \gamma' \simeq \widetilde\Gamma \right. \right\} 
 \right|}
    {\left| \text{Aut}(\Gamma)\right|} ,
\end{gathered}
\end{gather}
which confirms \eqref{eqn:formula_corollary}.
\end{proof}

Using this corollary, it is possible to prove the theorem, which is the purpose of this section:
\begin{mytheorem}
\label{prop:sum_formula}
\begin{align}
\label{eqn:sum_formula}
\sum \limits_{ \Gamma \in \mathcal{T}} \frac{\Delta_D \Gamma}{\left| \text{\normalfont{Aut}}(\Gamma) \right|} = 
\sum \limits_{\substack{\gamma = \left(\prod \limits_i \gamma_i\right) \in \mathcal{F}\\ \omega_D(\gamma_i) \leq 0}} \sum \limits_{\widetilde\Gamma \in \mathcal{T}}
\frac{ \left| \mathcal{I}( \widetilde\Gamma|\gamma ) \right| }{ 
    \left| \text{\normalfont{Aut}}( \gamma ) \right| \left| \text{\normalfont{Aut}}(\widetilde\Gamma) \right|}
    \gamma \otimes \widetilde\Gamma.
\end{align}
\end{mytheorem}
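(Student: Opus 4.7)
The plan is to start from the left-hand side, expand the coproduct using Definition \ref{def_cop}, and then reorder the double sum so that Corollary \ref{coro:pre_sum_formula} can be applied directly.

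First, I would substitute the definition of the coproduct to rewrite the left-hand side as
\begin{align*}
\sum_{\Gamma \in \mathcal{T}} \frac{\Delta_D \Gamma}{|\text{Aut}(\Gamma)|}
= \sum_{\Gamma \in \mathcal{T}} \frac{1}{|\text{Aut}(\Gamma)|} \sum_{\gamma' \unlhd \Gamma} \gamma' \otimes \Gamma / \gamma'.
\end{align*}
Next, I would swap the order of summation. The inner sum runs over labeled subgraphs $\gamma' \subseteq \Gamma$ whose connected components are superficially divergent 1PI graphs. Grouping terms according to the isomorphism classes of $\gamma'$ and of the cograph $\Gamma/\gamma'$, one replaces the inner sum by an outer sum over representatives $\gamma \in \mathcal{F}$ (with each component $\gamma_i \in \mathcal{T}$ satisfying $\omega_D(\gamma_i) \leq 0$) and $\widetilde\Gamma \in \mathcal{T}$. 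The remaining $\Gamma$-sum then counts, weighted by $1/|\text{Aut}(\Gamma)|$, the subgraphs of $\Gamma$ that realize the prescribed pair of isomorphism classes:
\begin{align*}
= \sum_{\substack{\gamma = \prod_i \gamma_i \in \mathcal{F}\\ \omega_D(\gamma_i) \leq 0}} \sum_{\widetilde\Gamma \in \mathcal{T}}
\left( \sum_{\Gamma \in \mathcal{T}} \frac{ \left| \left\{ \gamma' \subseteq \Gamma \,\middle|\, \gamma' \simeq \gamma,\ \Gamma/\gamma' \simeq \widetilde\Gamma \right\} \right| }{|\text{Aut}(\Gamma)|} \right) \gamma \otimes \widetilde\Gamma.
\end{align*}

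Finally, Corollary \ref{coro:pre_sum_formula} identifies the parenthesized expression with $|\mathcal{I}(\widetilde\Gamma | \gamma)| / \left( |\text{Aut}(\gamma)| |\text{Aut}(\widetilde\Gamma)| \right)$, yielding the right-hand side of \eqref{eqn:sum_formula}.

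The main conceptual step is the reordering, and the main obstacle is verifying the book-keeping of the boundary cases. Specifically, one needs the terms $\gamma' = \emptyset$ (giving $\mathbb{I} \otimes \Gamma$) and $\gamma' = \Gamma$ (giving $\Gamma \otimes \mathbb{I}$) to match the $\gamma = \mathbb{I}$ and $\widetilde\Gamma = \mathbb{I}$ contributions on the right, which is exactly what the special values in \eqref{eqn:1pisum_special} provide. One also has to check that the restriction $\gamma' \unlhd \Gamma$ is captured entirely by the isomorphism-class condition $\omega_D(\gamma_i) \leq 0$ on the outer sum (since being superficially divergent and 1PI is an invariant of the isomorphism class of each component), so that once the class of $\gamma$ is fixed, every subgraph $\gamma' \subseteq \Gamma$ with $\gamma' \simeq \gamma$ automatically satisfies $\gamma' \unlhd \Gamma$ and may be counted without further restriction.
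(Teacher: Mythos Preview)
Your proposal is correct and follows essentially the same route as the paper: both arguments pass through the same intermediate triple sum and invoke Corollary~\ref{coro:pre_sum_formula} as the key step, with the only cosmetic difference that the paper starts from the right-hand side and works toward the coproduct, whereas you expand the coproduct first and then apply the corollary. The paper also explicitly notes that the interchange of sums is justified by the loop-number grading (each graded piece is a finite sum), a point you may wish to mention.
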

\begin{proof}
The right hand side of the statement can be rewritten using 
corollary \ref{coro:pre_sum_formula}:
\begin{align}
&\begin{gathered}
\sum \limits_{ \Gamma \in \mathcal{T}} \frac{\Delta_D \Gamma}{\left| \text{\normalfont{Aut}}(\Gamma) \right|} = \\
=\sum \limits_{\Gamma \in \mathcal{T}} \sum \limits_{\substack{\gamma = \left(\prod \limits_i \gamma_i\right) \in \mathcal{F}\\ \omega_D(\gamma_i) \leq 0}} 
    \sum \limits_{\widetilde\Gamma \in \mathcal{T}}
\frac{ \left| \left\{ \gamma' \subseteq \Gamma \left| \gamma' \simeq \gamma \text{ and } 
 \Gamma / \gamma' \simeq \widetilde\Gamma \right. \right\} 
 \right|}{\left| \text{\normalfont{Aut}}(\Gamma) \right| }
 \gamma \otimes \widetilde\Gamma.
\end{gathered}
\end{align}
Interchanging sums is permissible, because the grading guaranties that the left hand side of the statement,
restricted on a certain loop number, will be a finite sum of tensor products.
Performing the two inner sums and taking definition of the relation $\unlhd$ in \eqref{eqn:relation_unlhd} into account, gives the result
\begin{align}
&\sum \limits_{ \Gamma \in \mathcal{T}} \frac{\Delta_D \Gamma}{\left| \text{\normalfont{Aut}}(\Gamma) \right|} 
= \sum \limits_{\Gamma \in \mathcal{T}} \sum \limits_{\substack{\gamma' \unlhd \Gamma}}
 \frac{1}{\left| \text{\normalfont{Aut}}(\Gamma) \right| }
 \gamma' \otimes \Gamma/\gamma',
\end{align}
where the inner sum coincides with the definition in equation \eqref{eqn:def_cop} of the coproduct on Feynman graphs.
\end{proof}

This formula gives a non-trivial check of the coproduct computation in \textbf{feyncop}. 
Identity \eqref{eqn:sum_formula} can be restricted to graphs with a certain residue $r$,
because the cographs carry the same residue as the original graph, and 
a certain loop number $L$ using the grading. 
Applying this restriction to both sides of Eq. \eqref{eqn:sum_formula} gives rise to
\begin{mycorollary}
\label{coro:sum_formula}
\begin{align}
  \begin{gathered}
\label{eqn:sum_different}
\sum \limits_{ \substack{ \Gamma \in \mathcal{T}\\ h_1(\Gamma) = L\\\text{\normalfont{res}}(\Gamma) = r}} \frac{\Delta_D \Gamma}{\left| \text{\normalfont{Aut}}(\Gamma) \right|} = \\
=\sum \limits_{ \substack{ l_1, l_2 \geq 0 \\ l_1 + l_2 = L} }
\sum \limits_{\substack{\gamma = \left(\prod \limits_i \gamma_i\right) \in \mathcal{F}\\ \omega_D(\gamma_i) \leq 0 \\h_1(\gamma) = l_1}} 
\sum \limits_{\substack{\widetilde\Gamma \in \mathcal{T}\\h_1(\widetilde\Gamma)=l_2\\\text{\normalfont{res}}(\widetilde\Gamma) = r } }
\frac{ \left| \mathcal{I}( \widetilde\Gamma|\gamma ) \right| }{ 
    \left| \text{\normalfont{Aut}}( \gamma ) \right| \left| \text{\normalfont{Aut}}(\widetilde\Gamma) \right|}
    \gamma \otimes \widetilde\Gamma.
  \end{gathered}
\end{align}
\end{mycorollary}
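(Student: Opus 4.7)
The plan is to start from the right-hand side of \eqref{eqn:sum_formula} and reduce it to the left-hand side using Corollary \ref{coro:pre_sum_formula}, which was precisely designed to bridge the insertion-count language (RHS) with the subgraph-count language (LHS via the coproduct definition). The strategy is essentially a bookkeeping argument: the content of the theorem is already packaged inside Corollary \ref{coro:pre_sum_formula}; what remains is to perform the correct rearrangement of sums.

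First I would substitute the identity from Corollary \ref{coro:pre_sum_formula} into the RHS of \eqref{eqn:sum_formula}, which replaces the factor $|\mathcal{I}(\widetilde\Gamma | \gamma)|/(|\text{Aut}(\gamma)||\text{Aut}(\widetilde\Gamma)|)$ by a sum over $\Gamma \in \mathcal{T}$ of $|\{\gamma' \subseteq \Gamma : \gamma' \simeq \gamma,\, \Gamma/\gamma' \simeq \widetilde\Gamma\}|/|\text{Aut}(\Gamma)|$. This yields a triple sum over $\gamma, \widetilde\Gamma, \Gamma$, with the condition that each connected factor $\gamma_i$ of $\gamma$ satisfies $\omega_D(\gamma_i) \leq 0$.

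Next I would justify interchanging the order of summation so that $\Gamma$ runs outermost. The connectedness/grading structure of $\mathcal{H}_D$ ensures that at each fixed loop order only finitely many terms contribute, so the rearrangement is legitimate termwise. With $\Gamma$ fixed, the inner double sum over $\gamma$ and $\widetilde\Gamma$ counts, with appropriate isomorphism-class weights, the actual subgraphs $\gamma' \subseteq \Gamma$ whose connected components are superficially divergent 1PI graphs, tensored with their cographs $\Gamma/\gamma'$. That is, for each such $\gamma'$ there is a unique pair of isomorphism classes $(\gamma, \widetilde\Gamma) = ([\gamma'], [\Gamma/\gamma'])$ to which it contributes, and conversely the inner sum over $(\gamma, \widetilde\Gamma)$ simply reindexes the sum over $\gamma' \unlhd \Gamma$ as defined in \eqref{eqn:relation_unlhd}.

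The main obstacle will be correctly interpreting the isomorphism-class combinatorics, particularly ensuring that the identification $\gamma' \leftrightarrow ([\gamma'], [\Gamma/\gamma'])$ is genuinely a bijection between $\{\gamma' \unlhd \Gamma\}$ and the set of triples $(\gamma, \widetilde\Gamma, \gamma')$ arising in the rearranged sum — here Corollary \ref{coro:pre_sum_formula} already absorbs the counting of representatives within each isomorphism class, which is what makes the collapse to $\sum_{\gamma' \unlhd \Gamma} \gamma' \otimes \Gamma/\gamma' / |\text{Aut}(\Gamma)|$ work cleanly. Once this is in place, recognizing this inner expression as $\Delta_D \Gamma / |\text{Aut}(\Gamma)|$ via Definition \ref{def_cop} finishes the proof.
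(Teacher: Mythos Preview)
Your argument is essentially the paper's own proof of Theorem~\ref{prop:sum_formula} (equation~\eqref{eqn:sum_formula}), and it is correct as such. But the statement you were asked to prove is Corollary~\ref{coro:sum_formula}, equation~\eqref{eqn:sum_different}, which is the \emph{restriction} of Theorem~\ref{prop:sum_formula} to a fixed residue type $r$ and a fixed total loop number $L$. You never address this restriction step; every reference in your proposal is to \eqref{eqn:sum_formula}, not to \eqref{eqn:sum_different}.

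The paper's proof of the Corollary consists precisely of justifying that restriction. Two observations are needed: first, the cograph $\Gamma/\gamma$ has the same residue as $\Gamma$, so imposing $\text{res}(\Gamma)=r$ on the left forces $\text{res}(\widetilde\Gamma)=r$ on the right; second, the loop number is additive under insertion, $h_1(\widetilde\Gamma\circ_i\gamma)=h_1(\gamma)+h_1(\widetilde\Gamma)$, so imposing $h_1(\Gamma)=L$ on the left yields exactly the decomposition $l_1+l_2=L$ on the right. These are easy facts, but they are the entire content of the Corollary beyond Theorem~\ref{prop:sum_formula}, and your proposal omits them. Once you add this short paragraph, the proof is complete (and you may simply invoke Theorem~\ref{prop:sum_formula} rather than reprove it).
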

\textbf{feyncop} implements a unit test which uses this formula to validate the computation if all 
graphs of a given loop number and a certain residue type are given as input. 
This check was performed for QED, Yang-Mills, $\varphi^3$ and $\varphi^4$ graphs 
up to at least fourth loop order and the residue types in $\mathcal{R}_E \cup \mathcal{R}_V$.
Additional checks were also performed with graphs of residue types not in  $\mathcal{R}_E \cup \mathcal{R}_V$.

\paragraph{Example}
To illustrate this validation, which is
performed using corollary \ref{coro:sum_formula}, 
an example is given.

Consider the sum of all two loop, propagator residue type $\varphi^4$-graphs weighted by their symmetry factor:
\begin{align*}
X^{
\parbox[c][7pt][t]{9pt}{\centering
        \begin{fmffile}{phi4_edge_insertion}
        \begin{fmfgraph}(7,7)
\end{fmfgraph}
        \end{fmffile}}
}_2:= 
\sum \limits_{ \substack{ \Gamma \in \mathcal{T}\\ h_1(\Gamma) = 2\\\text{\normalfont{res}}(\Gamma) = 
\parbox[c][7pt][t]{9pt}{\centering
        \begin{fmffile}{phi4_edge_insertion}
        \begin{fmfgraph}(7,7)
\end{fmfgraph}
        \end{fmffile}}
}} \frac{\Gamma}{\left| \text{\normalfont{Aut}}(\Gamma) \right|} = 
\frac18 \parbox[c][20pt][t]{30pt}{\centering
     \begin{fmffile}{double_tadpole2_fullexample}
        \begin{fmfgraph}(20,20)
\end{fmfgraph}
        \end{fmffile}}
+ 
\frac1{12}
\parbox[c][20pt][t]{40pt}{\centering
     \begin{fmffile}{triple_eye2_fullexample}
        \begin{fmfgraph}(30,20)
\end{fmfgraph}
        \end{fmffile}}.
\end{align*}
The coproducts of both graphs in four dimensions are 
\begin{gather*}
\Delta_4 \left( 
\parbox[c][20pt][t]{20pt}{\centering
\begin{fmffile}{double_tadpole2_fullexample}
        \begin{fmfgraph}(20,20)
\end{fmfgraph}
        \end{fmffile}}
\right) = \\
=\mathbb{I} \otimes 
\parbox[c][20pt][t]{20pt}{\centering
\begin{fmffile}{double_tadpole2_fullexample}
        \begin{fmfgraph}(20,20)
\end{fmfgraph}
        \end{fmffile}}
+
\parbox[c][20pt][t]{20pt}{\centering
\begin{fmffile}{double_tadpole2_fullexample}
        \begin{fmfgraph}(20,20)
\end{fmfgraph}
        \end{fmffile}}
\otimes 
\mathbb{I}
+
\parbox[c][20pt][t]{20pt}{\centering
    \begin{fmffile}{tadpole_fullexample}
    \begin{fmfgraph}(20,20)
\end{fmfgraph}
    \end{fmffile}
    }
\otimes
\parbox[c][20pt][t]{20pt}{\centering
    \begin{fmffile}{tadpole_fullexample}
    \begin{fmfgraph}(20,20)
\end{fmfgraph}
    \end{fmffile}
    }
+ 
\parbox[c][20pt][t]{20pt}{\centering
        \begin{fmffile}{eye_fullexample}
        \begin{fmfgraph}(20,20)
\end{fmfgraph}
        \end{fmffile}}
\otimes
\parbox[c][20pt][t]{20pt}{\centering
    \begin{fmffile}{tadpole_fullexample}
    \begin{fmfgraph}(20,20)
\end{fmfgraph}
    \end{fmffile}
    }\\
\Delta_4 \left( 
\parbox[c][20pt][t]{30pt}{\centering
     \begin{fmffile}{triple_eye2_fullexample}
        \begin{fmfgraph}(30,20)
\end{fmfgraph}
        \end{fmffile}}
\right) = \\
=\mathbb{I} 
\otimes
\parbox[c][20pt][t]{30pt}{\centering
     \begin{fmffile}{triple_eye2_fullexample}
        \begin{fmfgraph}(30,20)
\end{fmfgraph}
        \end{fmffile}}
+
\parbox[c][20pt][t]{30pt}{\centering
     \begin{fmffile}{triple_eye2_fullexample}
        \begin{fmfgraph}(30,20)
\end{fmfgraph}
        \end{fmffile}}
\otimes
\mathbb{I}
+
3 
\parbox[c][20pt][t]{20pt}{\centering
        \begin{fmffile}{eye_fullexample}
        \begin{fmfgraph}(20,20)
\end{fmfgraph}
        \end{fmffile}}
\otimes
\parbox[c][20pt][t]{20pt}{\centering
    \begin{fmffile}{tadpole_fullexample}
    \begin{fmfgraph}(20,20)
\end{fmfgraph}
    \end{fmffile}
    }.
\end{gather*}
Therefore, applying the coproduct according to definition \ref{def_cop} to the sum yields
\begin{gather*}
\Delta_4 X^{
\parbox[c][7pt][t]{9pt}{\centering
        \begin{fmffile}{phi4_edge_insertion}
        \begin{fmfgraph}(7,7)
\end{fmfgraph}
        \end{fmffile}}
}_2 = 
\mathbb{I}
\otimes
X^{
\parbox[c][7pt][t]{9pt}{\centering
        \begin{fmffile}{phi4_edge_insertion}
        \begin{fmfgraph}(7,7)
\end{fmfgraph}
        \end{fmffile}}
}_2
+
X^{
\parbox[c][7pt][t]{9pt}{\centering
        \begin{fmffile}{phi4_edge_insertion}
        \begin{fmfgraph}(7,7)
\end{fmfgraph}
        \end{fmffile}}
}_2
\otimes
\mathbb{I} +\\
+
    \frac18    
    \parbox[c][20pt][t]{20pt}{\centering
    \begin{fmffile}{tadpole_fullexample}
    \begin{fmfgraph}(20,20)
\end{fmfgraph}
    \end{fmffile}
    }
\otimes 
\parbox[c][20pt][t]{20pt}{\centering
    \begin{fmffile}{tadpole_fullexample}
    \begin{fmfgraph}(20,20)
\end{fmfgraph}
    \end{fmffile}
    }
+
\frac38
\parbox[c][20pt][t]{20pt}{\centering
        \begin{fmffile}{eye_fullexample}
        \begin{fmfgraph}(20,20)
\end{fmfgraph}
        \end{fmffile}}
\otimes
\parbox[c][20pt][t]{20pt}{\centering
    \begin{fmffile}{tadpole_fullexample}
    \begin{fmfgraph}(20,20)
\end{fmfgraph}
    \end{fmffile}
    }.
\end{gather*}
On the other hand identity \eqref{eqn:sum_different} gives:
\begin{align*}
\Delta_4 X^{
\parbox[c][7pt][t]{9pt}{\centering
        \begin{fmffile}{phi4_edge_insertion}
        \begin{fmfgraph}(7,7)
\end{fmfgraph}
        \end{fmffile}}
}_2 &= 
\sum \limits_{ \substack{ \Gamma \in \mathcal{T}\\ h_1(\Gamma) = 2\\\text{\normalfont{res}}(\Gamma) = 
\parbox[c][7pt][t]{9pt}{\centering
        \begin{fmffile}{phi4_edge_insertion}
        \begin{fmfgraph}(7,7)
\end{fmfgraph}
        \end{fmffile}}
}} \frac{\Delta_D \Gamma}{\left| \text{\normalfont{Aut}}(\Gamma) \right|} = 
\mathbb{I}
\otimes
X^{
\parbox[c][7pt][t]{9pt}{\centering
        \begin{fmffile}{phi4_edge_insertion}
        \begin{fmfgraph}(7,7)
\end{fmfgraph}
        \end{fmffile}}}_2
+
X^{
\parbox[c][7pt][t]{9pt}{\centering
        \begin{fmffile}{phi4_edge_insertion}
        \begin{fmfgraph}(7,7)
\end{fmfgraph}
        \end{fmffile}}}_2
\otimes
\mathbb{I}
+ \\
&+
\sum \limits_{\substack{\gamma = \left(\prod \limits_i \gamma_i\right) \in \mathcal{F}\\ \omega_4(\gamma_i) \leq 0 \\h_1(\gamma) = 1}} 
\sum \limits_{\substack{\widetilde\Gamma \in \mathcal{T}\\h_1(\widetilde\Gamma)=1\\\text{\normalfont{res}}(\widetilde\Gamma) = 
\parbox[c][7pt][t]{9pt}{\centering
        \begin{fmffile}{phi4_edge_insertion}
        \begin{fmfgraph}(7,7)
\end{fmfgraph}
        \end{fmffile}}
 } }
\frac{ \left| \mathcal{I}( \widetilde\Gamma|\gamma ) \right| }{ 
    \left| \text{\normalfont{Aut}}( \gamma ) \right| \left| \text{\normalfont{Aut}}(\widetilde\Gamma) \right|}
    \gamma \otimes \widetilde\Gamma,
\end{align*}
where the trivial summands were already evaluated in 
accordance to the special values given in 
\eqref{eqn:1pisum_special}. 

The only relevant graphs for the sum over the subgraphs $\gamma$ on the 
right hand side are $
\parbox[c][20pt][t]{20pt}{\centering
    \begin{fmffile}{tadpole_fullexample}
    \begin{fmfgraph}(20,20)
\end{fmfgraph}
    \end{fmffile}}$ and 
$\parbox[c][20pt][t]{20pt}{\centering
        \begin{fmffile}{eye_fullexample}
        \begin{fmfgraph}(20,20)
\end{fmfgraph}
        \end{fmffile}}$, because these are 
the only 1PI, $\varphi^4$, one-loop graphs 
with $\omega_4(\gamma)\leq0$. 
For the sum over the cographs $\widetilde\Gamma$ only 
$\parbox[c][20pt][t]{20pt}{
    \begin{fmffile}{tadpole_fullexample}
    \begin{fmfgraph}(20,20)
\end{fmfgraph}
    \end{fmffile}}$ has to be considered, because it is the 
only 1PI, one-loop graph with two external legs.

Calculating the numbers of insertions, 
\begin{align*}
\left| \mathcal{I}\left(
        \parbox[c][20pt][t]{20pt}{
    \begin{fmffile}{tadpole_fullexample}
    \begin{fmfgraph}(20,20)
\end{fmfgraph}
    \end{fmffile}
    }\left| 
 \parbox[c][20pt][t]{20pt}{\centering
\begin{fmffile}{tadpole_fullexample}
    \begin{fmfgraph}(20,20)
\end{fmfgraph}
    \end{fmffile}
    }\right.
\right) \right| &= 2 & 
\left| \mathcal{I}\left(
        \parbox[c][20pt][t]{20pt}{\centering
    \begin{fmffile}{tadpole_fullexample}
    \begin{fmfgraph}(20,20)
\end{fmfgraph}
    \end{fmffile}
    }\left| 
\parbox[c][20pt][t]{20pt}{\centering
        \begin{fmffile}{eye_fullexample}
        \begin{fmfgraph}(20,20)
\end{fmfgraph}
        \end{fmffile}}
\right. \right) \right| &= 24,
\end{align*}
which in this case only depend on the possible external leg permutations and
the orders of the automorphism groups,
\begin{align*}
\left| \text{Aut}\left( \parbox[c][20pt][t]{20pt}{\centering
    \begin{fmffile}{tadpole_fullexample}
    \begin{fmfgraph}(20,20)
\end{fmfgraph}
    \end{fmffile}
    }\right)\right| &= 4 & 
\left| \text{Aut}\left(
\parbox[c][20pt][t]{20pt}{\centering
        \begin{fmffile}{eye_fullexample}
        \begin{fmfgraph}(20,20)
\end{fmfgraph}
        \end{fmffile}}
        \right)\right| &= 16,
\end{align*}
yields the same result as the direct calculation above:
\begin{align*}
&\sum \limits_{\substack{\gamma = \left(\prod \limits_i \gamma_i\right) \in \mathcal{F}\\ \omega_4(\gamma_i) \leq 0 \\h_1(\gamma) = 1}} 
\sum \limits_{\substack{\widetilde\Gamma \in \mathcal{T}\\h_1(\widetilde\Gamma)=1\\\text{\normalfont{res}}(\widetilde\Gamma) = 
\parbox[c][7pt][t]{9pt}{\centering
        \begin{fmffile}{phi4_edge_insertion}
        \begin{fmfgraph}(7,7)
\end{fmfgraph}
        \end{fmffile}}
 } }
\frac{ \left| \mathcal{I}( \widetilde\Gamma|\gamma ) \right| }{ 
    \left| \text{\normalfont{Aut}}( \gamma ) \right| \left| \text{\normalfont{Aut}}(\widetilde\Gamma) \right|}
    \gamma \otimes \widetilde\Gamma = \\ 
&=\frac{2}{4 \cdot 4} ~~
\parbox[c][20pt][t]{20pt}{\centering
    \begin{fmffile}{tadpole_fullexample}
    \begin{fmfgraph}(20,20)
\end{fmfgraph}
    \end{fmffile}
    }
\otimes 
\parbox[c][20pt][t]{20pt}{\centering
    \begin{fmffile}{tadpole_fullexample}
    \begin{fmfgraph}(20,20)
\end{fmfgraph}
    \end{fmffile}
    }
+
\frac{24}{16 \cdot 4} ~~
\parbox[c][20pt][t]{20pt}{\centering
        \begin{fmffile}{eye_fullexample}
        \begin{fmfgraph}(20,20)
\end{fmfgraph}
        \end{fmffile}}
\otimes
\parbox[c][20pt][t]{20pt}{\centering
    \begin{fmffile}{tadpole_fullexample}
    \begin{fmfgraph}(20,20)
\end{fmfgraph}
    \end{fmffile}
    }.
\end{align*}

\section{Diagram generation}
\label{chap:diagram_gen}
\subsection{Overview}
The python program \textbf{feyngen} can
generate $\varphi^k$ for $k \ge 3$, QED and Yang-Mills diagrams ready to be used 
in green's function calculations. One of the main purposes of \textbf{feyngen} is to provide input 
for the coproduct calculation in \textbf{feyncop}.

Developing \textbf{feyngen}, the focus was on the generation 
of Feynman diagrams with comparatively large loop orders. 
For instance, all $130516$ 1PI, $\varphi^4$, $8$-loop diagrams with four external legs can 
be generated, together with their symmetry factor, within eight hours and all $342430$ 1PI, QED, vertex residue type, $6$-loop diagrams (with neglect to Furry's theorem) can be 
generated in three days both on a standard end-user PC. 

Additionally to the computation of non-isomorphic diagrams,
\textbf{feyngen} calculates the symmetry 
factors of the resulting graphs.
Handling of leg-fixed and non-leg-fixed graphs is implemented.
Furthermore, options are available to filter for connected, 1PI, 
vertex-2-connected and tadpole free graphs. 

To achieve the high speed for the computation \textbf{feyngen} 
relies on the established \textbf{nauty} package. 

\subsection{Sketch of the implementation of \textbf{feyngen}}
The graphs in \textbf{feyngen} are represented as 
edge lists or respectively as an ordered sequence 
representing the multiset $\Gamma^{[1]}$ of a graph $\Gamma$ 
as defined in definition \ref{def_feynman_graph}. 
The edges themselves are stored as pairs of vertices.
Vertices are represented by integers $\geq 0$. 
The vertex set is not explicitly stored, but every vertex 
is given implicitly by the edges incident to it. This 
is possible, because vertices with valency $0$ are 
not required. 
Internal and external edges are stored equally. 
External vertices are only characterized by their valency being $1$.

In the first step of the computation of non-isomorphic Feynman 
diagrams, the \textbf{nauty} programs \textbf{geng} and \textbf{multig}, 
whose implementation is discussed in \cite{mckay1998isomorph}, 
are used to compute non-isomorphic multigraphs without 
self-loops. 

\textbf{feyngen} decorates these diagrams with self-loops 
if the generation of tadpole diagrams is requested,
until every vertex has the desired vertex type. 
In the case of theories with $3$-valent vertices, the graph is set up 
appropriately with a certain number of extra legs, for all ``snails'' to 
be generated. 

If the generation of QED or Yang-Mills graphs is required the ways to color and direct the graph's edges as photon,
fermion or ghost edges, to yield a valid QED or Yang-Mills graph according 
to definition \ref{def_feynman_graph} , are computed.

Afterwards, the filters requested by the given options are 
tested on the graph. 
If the graph fulfills all desired properties, the 
algorithm continues. Next, if not otherwise stated by the parameters,
the different topologies resulting from possible 
permutations of the external legs are computed. That means 
the external legs are fixed.

To calculate the order of the automorphism group, the edge and vertex 
colored multigraph must be converted to a vertex colored, 
simple graph. This conversation is achieved by adding 
auxiliary colored vertices, representing the colored or 
multiple edges.

In the last step, using \textbf{nauty}, the order of the automorphism group 
of the graph is computed and it is ensured that by coloring, 
adding self-loops or permuting the external legs no isomorphic graphs 
were generated.

\subsection{Validation}
To check the validity of the diagram generation, the sum of the symmetry factors of a zero dimensional QFT can be used. 

From the generation function of the zero dimensional quantum field theory, for instance 
\begin{align}
    \label{eqn_Z_generating}
    Z_{\varphi^k}( a, \lambda, j ) &:= 
    \int \limits_\mathbb{R} \frac{d \varphi}{\sqrt{2 \pi a}} ~
    e^{ - \frac{\varphi^2}{2 a} + \lambda \frac{\varphi^k}{k!} + j \varphi },
\end{align}
in the case of $\varphi_k$ theories can be used to obtain a formal power series, 
\begin{align}
\label{eqn_Z_sum}
\widetilde{Z}_{\varphi^k}( a, \lambda, j ) &= 
\sum \limits_{l \ge 0} 
\sum \limits_{ \substack{ n,m \ge 0 \\n k + m = 2 l } }
\frac{(2l -1)!!}{n! m! (k!)^n} a^l \lambda^n j^m ,
\end{align}
where the factor in front of $a^L \lambda^N j^M$ gives the sum of the symmetry factors of all $\varphi^k$ graphs with $L$ edges, $N$ vertices and $M$ legs.
In the same way, a formal power series giving the sum of the symmetry factors of classes of QED and Yang-Mills graphs can be obtained.
The numbers for connected diagrams can be obtained by expanding $\log\left( \widetilde{Z} \right) $.
A more detailed account of this procedure is given in \cite[chap. 9]{Itzykson}.

If \textbf{feyngen} calculates all diagrams or all 
connected diagrams of a given class, 
the sum of the symmetry factors is 
compared to an corresponding term in a 
power series, associated to a zero dimensional quantum field theory.

Another validation for the computation of 1PI graphs is possible in conjunction 
with $\textbf{feyncop}$. It is described in section \ref{sec:check_feyncop}.

It should be noted that no explicit tests, which compared the output of other Feynman graph generation tools with the output of \textbf{feyngen}, were performed. The reason for this was the ambiguous choice of a graph labeling.

\section{Coproduct computation}
\label{chap:cop_comp}
\subsection{Overview}
The python program \textbf{feyncop} can 
be used to compute the reduced coproduct $\widetilde{\Delta}_D$ of given 
1PI graphs as defined in \eqref{def:red_cop}. 
The output of \textbf{feyngen} can be piped into 
\textbf{feyncop} to calculate the reduced coproduct 
of all 1PI graphs of a given loop order and residue type.

By default, the subgraphs composed of superficially divergent, 1PI graphs 
of the input graphs are computed and given as output. 
These correspond to the left factor of the 
tensor product originating the coproduct in resemblance to 
definition \ref{def_cop}.
Optionally, the complementary cographs,
giving account to the right factor of the tensor product, can be computed.
Furthermore, there is the option to identify the  
sub- and cographs with unlabeled 
1PI graphs.

Additionally, the input graphs can be filtered for primitive graphs (definition \ref{def:primitive}).

The coproduct calculation does only take the degree of divergence 
obtained by power counting, formulated by the map $\omega_D$ as in 
definition \ref{def_omega_D} into
account. Further information, as gained by Furry's theorem in 
the case of QED, is not used.
\subsection{Sketch of the implementation of \textbf{feyncop}}
First \textbf{feyncop} 
reads the input graphs, piped into the python program, 
in a manner compatible with \textbf{feyngen}. 

Internally, graphs are represented as edge lists as 
in the implementation of \textbf{feyngen}. Optionally, a type can be 
assigned to every edge. By default, a bosonic non-oriented edge 
with weight $2$ is assumed. 

To calculate the coproduct, the subgraphs of the given 
graph are computed. They are tested for one particle irreducibleness 
and superficially divergence of their connected components. 
If required, the edges of the cograph are computed by 
shrinking the internal subgraph edges subsequently as 
described in section \ref{subsec:contractions}.

If the sub- and cographs are to be identified with 
full fledged Feynman graphs, two-valent vertices 
are removed from the cograph and 
external legs are added in accordance to the
vertex types of the graph. 

To identify the subgraphs with unlabeled ones the 
\textbf{nauty} package is used to compute a canonical 
labeling. 
Similar types of resulting tensor products are collected 
and given as output.

\subsection{Validation}
\label{sec:check_feyncop}
The reduced coproduct computation was 
checked for validity by testing the 
results for the coassociativity and the gradedness of the 
coproduct derived in chapter \ref{chap:hopf_algebra}. 

Additionally, the coproduct of a sum of 1PI graphs 
of a given graded class of graphs has been 
checked for accordance with 
corollary \ref{coro:sum_formula}. 
This check also confirms the 
validity of the computation of 1PI diagrams by \textbf{feyngen}.

\section{Manual of \textbf{feyngen}}
\label{sec:manual_feyngen}
\subsection{Overview}
In this section the commands of \textbf{feyngen} together 
with input and output format is described.
Because \textbf{feyngen} is a program optimized to generate 
non-isomorphic high loop Feynman 
diagrams, the only mandatory parameter for \textbf{feyngen} 
is the order in $\hbar$  in 
the perturbation series of the class of diagrams to be generated. 
This corresponds to the number
\begin{align*}
h_1( \gamma ) - f_\gamma + 1,
\end{align*}
where $f_\gamma$ is the number of connected components of the 
graphs $\gamma$ to generate.
For connected graphs, $f_\gamma=1$, this number is equivalent to the loop number $h_1( \gamma )$. 

For example, the call to \textbf{feyngen},
\begin{code}
#\$# ./feyngen 3 4 5
\end{code}
will generate all connected $3$, $4$ and $5$ loop vacuum diagrams and all disconnected diagrams which correspond to these 
loop orders in respect to their order in $\hbar$ in the 
perturbation series of $\varphi^4$ theory.
\subsection{Options and Parameters}
Additionally to the loop number, the generation can be controlled by various options. 
\textbf{feyngen} called with the option \mbox{\textbf{-{}-help}} prints the list of all possible program options together with a short description.
\begin{code}
#\$# ./feyngen --help
\end{code}

For instance, only graphs with certain properties can be generated. 
These restrictions can be set by the following options:
\begin{description}
  \item[-c / -{}-connected]  \hfill \\ Generate only connected graphs.
  \item[-p / -{}-1PI]        \hfill \\ Generate only 1PI graphs.
  \item[-v / -{}-vtx2cntd]   \hfill \\ Generate only 2-vertex connected graphs.
  \item[-t / -{}-notadpoles] \hfill \\ Generate only non-tadpole graphs.
\end{description}
By default, \textbf{feyngen} generates $\varphi^4$ graphs. The four options:
\begin{description}
  \item[-k$\#$ / -{}-valence=$\#$]  \hfill \\ Generate graphs with vertex valence $\#$ for a scalar $\varphi^\#$ theory. 
  \item[-{}-phi34]             \hfill \\ Generate scalar graphs with $3$ or $4$ valent vertices ($ \left[\varphi^3 + \varphi^4 \right]$-theory).
  \item[-{}-qed]             \hfill \\ Generate graphs for QED (with neglect of Furry's theorem).
  \item[-{}-qed\_furry]             \hfill \\ Generate graphs for QED (respecting Furry's theorem).
  \item[-{}-ym]             \hfill \\ Generate graphs for Yang-Mills theory.
\end{description}
can be used to alter the type of graphs generated.
The external leg structure of the graphs is determined by the parameters,
\begin{description}
  \item[-j$\#$ / -{}-ext\_legs=$\#$] \hfill \\ Set the total number of external legs $\#$ of $\varphi^k$ graphs. 
  \item[-b$\#$ / -{}-ext\_boson\_legs=$\#$]  \hfill \\  Set the number of external boson legs $\#$ of QED or Yang-Mills graphs.
  \item[-f$\#$ / -{}-ext\_fermion\_legs=$\#$]  \hfill \\  Set the number of external fermion legs $\#$ of QED or Yang-Mills graphs.
  \item[-g$\#$ / -{}-ext\_ghost\_legs=$\#$]  \hfill \\  Set the number of external ghost legs $\#$ of Yang-Mills graphs.
\end{description}
depending on whether graphs for $\varphi^k$-theory, for QED or for Yang-Mills are generated. 
By default, only graphs without external legs are given as output.

Additionally, the behaviour under graph isomorphisms of the external legs can be controlled:
\begin{description}
  \item[-u / -{}-non\_leg\_fixed]  \hfill \\ Generate non-leg-fixed graphs. External legs of graphs are not considered as fixed if this option is set. This option influences isomorphism testing and symmetry factor calculation of graphs.
\end{description}
If not stated otherwise, leg-fixed diagrams are generated.
\subsection{Output of $\varphi^k$-graphs}
\paragraph{Representation of graphs}
Graphs are represented as edge lists. An 
edge is represented by a pair of vertices 
or a triple of two vertices and a letter, representing the type of 
the edge. The types used by \textbf{feyngen} 
are \texttt{f}, \texttt{A} and \texttt{c} 
representing fermion, gauge boson and ghost edges. 
The pairs or triples are embraced by brackets.
Vertices are labeled by integers. 

So for example 
\begin{align*}
\texttt{[2,3]}& &\text{and}& & \texttt{[6,4,f]}
\end{align*}
represent one edge without specific type between the vertices $2$ and $3$ 
and one fermion edge between the vertices $4$ and $6$.
For $\phi^k$ graphs no edge type will be specified. For 
QED and Yang-Mills graphs fermions and ghosts will be oriented edges and 
gauge bosons will be non oriented edges.
This way, $\texttt{[6,4,f]}$ can be interpreted as an fermion pointing 
from vertex $6$ to vertex $4$.

External edges are not distinguished from internal edges, except 
that they are incident to an external one-valent 
vertex as in definition \ref{def_feynman_graph}.

The edge list is embraced by brackets and prefixed by a \texttt{G} to 
simplify the usage of the output with \textbf{maple}.
This output format of graphs will be referred to as the \texttt{G}-format.
\paragraph{Example}
The graphs
\begin{align*}
   &\parbox[c][25pt][c]{55pt}{\centering
   \begin{fmffile}{example1}
    \begin{fmfgraph*}(50,25)
\end{fmfgraph*}
    \end{fmffile}}& 
    &\text{and}&
   &\parbox[c][25pt][c]{55pt}{\centering
   \begin{fmffile}{example2}
    \begin{fmfgraph*}(50,25)
\end{fmfgraph*}
    \end{fmffile}}
\end{align*}
are represented as
\begin{align*}
  \begin{gathered}
\small \texttt{G[[1,0],[1,0],[2,0],[3,1]]} \\
\text{and} \\
\small \texttt{G[[0,0],[1,0],[2,0]]} ,
  \end{gathered}
\end{align*}
where in the first diagram $0$ and $1$ are the 
internal vertices and $2$ and $3$ are
external vertices. 
In the second diagram $0$ is the only internal vertex with 
$1$ and $2$ depicting external vertices.
The labeling of the vertices is auxiliary and 
is used to give a representative of the isomorphism class of 
the graph. 
The labeling is assigned using \textbf{nauty}, which 
chooses a canonical labeling unique for every isomorphism class 
of graphs. The labeling is chosen, such that the external 
vertices carry the highest labels.

\paragraph{Output of graph sums}
The output format of \textbf{feyngen} is designed to be readable by a \textbf{maple} program. 
Therefore, the graphs generated by \textbf{feyngen} are
written as a sum of graphs with every graph weighted by its 
symmetry factor.

For instance, the sum of all $\varphi^3$ 2-loop graphs without external legs, 
with each graph weighted by its symmetry factor, depicted 
diagrammatically as
\begin{align*}
    &\frac{1}{8}
        \parbox[c][25pt][t]{55pt}{\centering
            \begin{fmffile}{vacuum5_p}
        \begin{fmfgraph*}(50,25)
\end{fmfgraph*}
        \end{fmffile}}
    +
    \frac{1}{12}
        \parbox[c][25pt][t]{45pt}{\centering
            \begin{fmffile}{vacuum4_p}
        \begin{fmfgraph*}(25,25)
\end{fmfgraph*}
        \end{fmffile}}
\end{align*}
can be generated by \textbf{feyngen} as follows: 
\begin{code}
#\$# ./feyngen 2 -k3
phi3_j0_h2 :=
+G[[0,0],[1,0],[1,1]]/8
+G[[1,0],[1,0],[1,0]]/12
;
\end{code}
where the \textbf{2} in the command line stands for diagram
generation of order $\hbar^2$ and \textbf{-k3} for $\varphi^3$-theory, such that 
only graphs with three valent vertices are generated.
\paragraph{Symmetry factors}
As can be seen in the above 
example, the graphs are given weighted by their 
symmetry factor. The format is
\begin{code}
G[...]/Aut,
\end{code}
where \texttt{Aut} is the order of the automorphism group of the graph.

Note that in general, the calculation of the symmetry factor depends
on the \textbf{-u} option if external legs are present, depending 
on whether external legs are fixed or not. 
An explicit example for the behaviour of the \textbf{-u} option is given 
in section \ref{sec:labandunlab}.
\paragraph{Distinguished name for \textbf{maple} usage} 
The sum of graphs is given 
a name, which indicates the loop number(s), the number of 
external edges and the theory type.

That means, the output is always of the 
form:
\begin{code}
phi(k)_j(m)_h(L) := 
+G[...]/Aut1
+G[...]/Aut2
+...
...
;
\end{code}
where  \texttt{(k)} is replaced with the appropriate 
theory degree, \texttt{(m)} is substituted by  
the number of external edges and \texttt{(L)} is 
given by the order in $\hbar$ of the graphs.
\subsection{Output of QED and Yang-Mills graphs}
QED diagrams carry additional information, because they have two different edge types. 
Consequently, an QED edge 
\texttt{[v1,v2,t]} is depicted as a pair of vertices \texttt{v1,v2} together with 
the edge's type \texttt{t}. Possible edge types for 
QED graphs are \texttt{f} for fermions and \texttt{A} for photons. For Yang-Mills 
graphs the possible types are \texttt{f}, \texttt{A} and \texttt{c} for fermions, 
gauge bosons (e.g. gluons) and Faddeev-Popov ghosts respectively.
Fermions and ghosts edges are oriented - gauge boson edges are not.

QED and Yang-Mills Feynman diagrams are represented as edge lists as in the 
last section. Because they are oriented,
fermion and ghost edges are 
depicted as ordered pairs of vertices. 
\paragraph{Example}
For example, the graph
\begin{align*}
   &\parbox[c][25pt][c]{55pt}{\centering
   \begin{fmffile}{example5}
    \begin{fmfgraph*}(50,25)
\end{fmfgraph*}
    \end{fmffile}}
\end{align*}
is represented as
\begin{code}
G[[0,1,f],[1,0,f],[2,0,A],[3,1,A]]
\end{code}
and
\begin{align*}
   &\parbox[c][35pt][c]{45pt}{\centering
   \begin{fmffile}{example6}
    \begin{fmfgraph*}(35,35)
\end{fmfgraph*}
    \end{fmffile}}
\end{align*}
is depicted as
\begin{code}
G[[1,0,f],[2,1,f],[2,0,A],
    [0,3,f],[5,2,f],[4,1,A]].
\end{code}
The orientation of the fermion lines matches the 
ordering of the vertices in the edges. 
\textbf{feyngen} only generates graphs with valid QED vertex types 
as described in definition \ref{def_feynman_graph}.
\paragraph{Output of graph sums}
The output of graph sums is similar to the output of $\varphi^k$ 
graphs. As for the treatment of $\varphi^k$ graphs, an example 
for the case of QED and Yang-Mills diagram generation is given.
\paragraph{Examples}
Consider the sum of all two loop, photon propagator residue type, 1PI, QED diagrams:
\begin{align*}
   &   \parbox[c][25pt][t]{55pt}{\centering
   \begin{fmffile}{example9}
    \begin{fmfgraph*}(50,25)
\end{fmfgraph*}
    \end{fmffile}} + 
   \parbox[c][25pt][t]{55pt}{\centering
   \begin{fmffile}{example10}
    \begin{fmfgraph*}(50,25)
\end{fmfgraph*}
    \end{fmffile}} +
    \parbox[c][25pt][t]{55pt}{\centering
   \begin{fmffile}{example8}
    \begin{fmfgraph*}(50,25)
\end{fmfgraph*}
    \end{fmffile}}.
\end{align*}
\textbf{feyngen} generates them if it is called with the command line
\begin{code}
#\$# ./feyngen --qed 2 -b2 -p
qed_f0_b2_h2 :=
+G[[0,1,f],[1,2,f],[2,3,f],[3,0,f],
    [3,2,A],[4,0,A],[5,1,A]]/1
+G[[0,1,f],[1,2,f],[2,3,f],[3,0,f],
    [2,1,A],[4,0,A],[5,3,A]]/1
+G[[0,3,f],[1,2,f],[2,0,f],[3,1,f],
    [3,2,A],[4,0,A],[5,1,A]]/1
;
\end{code}
\textbf{-{}-qed} indicates QED graph generation, \textbf{2} 
stands for $2$-loop diagrams ($\hbar^2$), \textbf{-b2} makes \textbf{feyngen} 
generate graphs with 2 photon legs and 
the \textbf{-p} option filters out non 1PI graphs.

For the sum of all one loop, gauge boson propagator residue type, 1PI, Yang-Mills diagrams, 
\begin{align*}
   &   \parbox[c][25pt][t]{55pt}{\centering
   \begin{fmffile}{ym_example_1pi_fpg}
    \begin{fmfgraph*}(50,25)
\end{fmfgraph*}
    \end{fmffile}} + 
    \parbox[c][25pt][t]{55pt}{\centering
   \begin{fmffile}{ym_example_1pi_gb}
    \begin{fmfgraph*}(50,25)
\end{fmfgraph*}
    \end{fmffile}} + 
    \parbox[c][25pt][t]{55pt}{\centering
   \begin{fmffile}{ym_example_1pi_f}
    \begin{fmfgraph*}(50,25)
\end{fmfgraph*}
    \end{fmffile}} .
\end{align*}
the call, 
\begin{code}
#\$# ./feyngen --ym 1 -tp -b2 
\end{code}
where the generation of Yang-Mills graphs is triggered with the  \textbf{-{}-ym} option,
gives the desired result:
\begin{code}
ym_f0_g0_b2_h1 :=
+G[[0,1,c],[1,0,c],[2,0,A],[3,1,A]]/1
+G[[0,1,f],[1,0,f],[2,0,A],[3,1,A]]/1
+G[[1,0,A],[1,0,A],[2,0,A],[3,1,A]]/2
;
\end{code}

Another example for Yang-Mills graph generation is the command,
\begin{code}
#\$# ./feyngen --ym 1 -p -f2 -b1
\end{code}
which generates the two one loop, 1PI graphs 
with two fermion and one gluon leg:
\begin{code}
ym_f2_g0_b1_h1 :=
+G[[0,1,f],[2,0,f],[2,1,A],
    [1,3,f],[5,2,f],[4,0,A]]/1
+G[[2,0,f],[1,0,A],[2,1,A],
    [0,3,f],[5,2,f],[4,1,A]]/1
;
\end{code}
Written diagrammatically as, 
\begin{align*}
   &\parbox[c][45pt][c]{55pt}{\centering
   \begin{fmffile}{ym_example1}
    \begin{fmfgraph*}(35,35)
\end{fmfgraph*}
    \end{fmffile}} + 
   \parbox[c][45pt][c]{55pt}{\centering
   \begin{fmffile}{ym_example2}
    \begin{fmfgraph*}(35,35)
\end{fmfgraph*}
    \end{fmffile}}.
\end{align*}
\paragraph{Distinguished name for \textbf{maple} usage} 
The name of the graph sum for QED diagrams is 
slightly modified:
\begin{code}
qed_f(m1)_b(m2)_h(L) := 
...
;
\end{code}
For the output of graphs respecting Furry's theorem (see \textbf{-{}-qed\_furry} option) the line will read:
\begin{code}
qed_furry_f(m1)_b(m2)_h(L) := 
...
;
\end{code}
Instead of \texttt{j(m)} indicating the total number
of legs as in the case of $\varphi^k$ graphs, 
\texttt{f(m1)} and \texttt{b(m2)} are given 
with \texttt{(m1)} being replaced by 
the number of fermion legs and 
\texttt{(m2)} by the number of photon legs.
For the output of Yang-Mills graphs the pattern is modified in the following way,
\begin{code}
ym_f(m1)_g(m3)_b(m2)_h(L) := 
...
;
\end{code}
with the same numbers and \texttt{(m3)}, the number of ghost legs, to be plugged in.
\subsection{Labeled and unlabeled legs}
\label{sec:labandunlab}
Without the \textbf{-u} option the external 
legs are considered as fixed and leg-fixed diagrams are generated. For this reason
the first and the second graphs in the last QED example are not isomorphic 
and all diagrams carry a symmetry factor of $1$.
The \textbf{-u} option controls this behaviour and influences 
the generation of non-isomorphic graphs and the calculation 
of symmetry factors appropriately.
\paragraph{Example} Consider again the two loop, photon 
propagator, 1PI, QED diagrams, but without fixed 
external legs, 
\begin{align*}
   &   \parbox[c][35pt][c]{60pt}{\centering
   \begin{fmffile}{example9_unlab}
    \begin{fmfgraph*}(50,25)
\end{fmfgraph*}
    \end{fmffile}} + 
    \frac12
    \parbox[c][35pt][c]{60pt}{\centering
   \begin{fmffile}{example8_unlab}
    \begin{fmfgraph*}(50,25)
\end{fmfgraph*}
    \end{fmffile}}.
\end{align*}
Because of the additional freedom in permuting the 
edges and vertices, the first two 
diagrams of the last QED example belong to 
the same isomorphism class. 
Furthermore, the second diagram in this example gains an additional 
symmetry generator of index $2$, such that 
the order of the automorphism group increases from $1$ to $2$.
This can be reproduced using \textbf{feyngen} with the \textbf{-u} option:
\begin{code}
#\$# ./feyngen --qed 2 -b2 -p -u
qed_f0_b2_h2_nlf :=
+G[[0,1,f],[1,3,f],[2,0,f],[3,2,f],
    [1,0,A],[4,3,A],[5,2,A]]/1
+G[[0,2,f],[1,3,f],[2,1,f],[3,0,f],
    [3,2,A],[4,0,A],[5,1,A]]/2
;
\end{code}
The text \texttt{nlf}, standing for non-leg-fixed, succeeding
the name of the graph sum acknowledges this behaviour 
for later reproducibility.

\section{Manual of \textbf{feyncop}}
\label{sec:manual_feyncop}
\subsection{Overview}
This section shall introduce the commands and the output 
format of \textbf{feyncop}.

A graph or a sum of graphs, for which the reduced coproduct 
shall be calculated, must 
piped as input into \textbf{feyncop}.
For example,
\begin{code}
#\$# echo "G[[1,0],[1,0],[1,0],[2,0],[3,1]]" 
    | ./feyncop -D4
\end{code}
where the \textbf{echo} command is used to pipe the input into \textbf{feyncop}, 
will give the three proper non empty subgraphs of the diagram 
\begin{align*}
    \parbox[c][25pt][t]{55pt}{\centering
    \begin{fmffile}{mirrorsunrise}
    \begin{fmfgraph*}(50,25)
\end{fmfgraph*}
    \end{fmffile}
    },
\end{align*}
which are, in four dimensions, composed of superficially divergent 1PI graphs.
Details to the output format will be given in section \ref{sec:output_sg} and 
the following ones with elaborate examples.

QED graphs are handled the same way, except for 
the weights that must be given with the edges. 
For instance, the subgraphs for the reduced coproduct $\widetilde\Delta_4$ of the graph 
\begin{align*}
    \parbox[c][35pt][c]{55pt}{\centering
   \begin{fmffile}{qed_2l_1pi_prop}
    \begin{fmfgraph*}(50,25)
\end{fmfgraph*}
    \end{fmffile}}
\end{align*}
or given in the \texttt{G}-format as 
\begin{code}
G[[0,3,f],[1,2,f],[2,0,f],[3,1,f],
    [3,2,A],[4,0,A],[5,1,A]]
\end{code}
can be calculated using the command line 
\begin{code}
#\$# echo "G[[0,3,f],[1,2,f],[2,0,f],[3,1,f],
[3,2,A],[4,0,A],[5,1,A]]" | ./feyncop -D4 
\end{code}
.

\subsection{Option and Parameters}
Similar to \textbf{feyngen}, 
\begin{code}
#\$# ./feyncop --help
\end{code}
prints a list of the available options and parameters with a short summary. 

The parameter $D$, altering the dimension parameter of the reduced coproduct $\widetilde \Delta_D$, 
is controlled by the option
\begin{description}
  \item[-D$\#$ / -{}-dimension=$\#$] \hfill \\ Set the dimension for the calculation of the coproduct. 
\end{description}
By default $D=4$ is assumed.

By default, \textbf{feyncop} calculates only the subgraphs, 
which are composed of superficially divergent 1PI graphs, of the input graphs. 
A sum of graphs with a list of the subgraphs, composed of superficially divergent, 1PI graphs is given as output.
This behaviour can be changed by the options:
\begin{description}
  \item[-c / -{}-cographs]              \hfill \\ Calculate and print the cographs additionally to the corresponding subgraphs. 
  Output format: Sum of graphs with a list of the subgraphs, composed of superficially divergent, 1PI graphs, and their cograph.
  \item[-u / -{}-unlabeled]             \hfill \\ Transform the subgraphs and the cographs to unlabeled graphs and identify similar tensor products. Output format: Sum of tensor products.
  \item[-p / -{}-primitives]            \hfill \\ Only filter the input graphs for primitive graphs. Output format: Sum of graphs.
\end{description}
\subsection{Reference to edges}
\textbf{feyncop} represents subgraphs as a 
set of edges of the original graph.
This has the advantage that the location of the 
subgraph in the original graph is implicitly included in the output.
This information is crucial for some applications 
of the coproduct of a Feynman diagram.
The edges are referred to by their 
index in the edge list of the \texttt{G}-format 
starting with $0$.

\paragraph{Example}
The graph, with an auxiliary vertex labeling,
\begin{align*}
    \parbox[c][50pt][c]{50pt}{\centering
    \begin{fmffile}{doubleeye}
    \begin{fmfgraph*}(40,40)
\end{fmfgraph*}
    \end{fmffile}
    },
\end{align*}
represented in the \texttt{G}-format by
\begin{code}
G[[1,0],[2,0],[2,0],[3,1],[3,1],
    [3,2],[4,0],[5,1],[6,2],[7,3]],
\end{code}
will be assigned the following internal edge labels, 
\begin{align*}
    \parbox[c][45pt][c]{50pt}{\centering
    \begin{fmffile}{doubleeye_edgelabeled}
    \begin{fmfgraph*}(40,40)
\end{fmfgraph*}
    \end{fmffile}
    }.
\end{align*}
Here, labels for external legs were omitted for 
simplicity. They are not of interest for 
the description of subgraphs.
\subsection{Output of subgraphs}
\label{sec:output_sg}
By default, only the subgraphs composed of superficially 
divergent 1PI graphs are outputted. 
The output is given as pairs of the original graph 
in the \texttt{G}-format and of the subgraphs, 
each split into its connected components. 
These pairs are preceded by a \texttt{D} to mark a new 
object suitable to be read by \textbf{maple}. 
Therefore, giving a single graph as input, the output will take the form:
\begin{code}
D[ ( original graph ), [ 
{ { 1. subgraph's 1. connected 
    component's edges }, 
{ 1. subgraph's 2. connected 
    component's edges }, 
... 
}, 
{ { 2. subgraph's 1. connected 
    component's edges }, 
    ... }
... 
] ]
\end{code}
\paragraph{Example}
The graph 
    $\parbox[c][20pt][t]{22.5pt}{\centering
    \begin{fmffile}{doubleeye_unlabeled}
    \begin{fmfgraph}(20,20)
\end{fmfgraph}
    \end{fmffile}
    }$,
represented in the \texttt{G}-format as above, 
the output to the call, 
\begin{code}
#\$# echo "G[[1,0],[2,0],[2,0],[3,1],[3,1],[3,2],
[4,0],[5,1],[6,2],[7,3]]" | ./feyncop -D4
\end{code}
will look as follows:
\begin{code}
+ D[G[[1,0],[2,0],[2,0],[3,1],[3,1],
    [3,2],[4,0],[5,1],[6,2],[7,3]],
[{{1,2}}, {{3,4}}, {{1,2},{3,4}}]]
;
\end{code}
The original graph is paired with the 
information about the subgraphs composed of superficially divergent components in 
\texttt{[\{\{1,2\}\}, \{\{3,4\}\}, \{\{1,2\},\{3,4\}\}]}.
The three sets in the list correspond to the three 
subgraphs, indicated by thick lines,
\begin{align*}
&\parbox[c][40pt][t]{55pt}{\centering
    \begin{fmffile}{doubleeye_unlabeled_subgraph1}
    \begin{fmfgraph*}(40,40)
\end{fmfgraph*}
    \end{fmffile}
    },&
&\parbox[c][40pt][t]{55pt}{\centering
    \begin{fmffile}{doubleeye_unlabeled_subgraph2}
    \begin{fmfgraph*}(40,40)
\end{fmfgraph*}
    \end{fmffile}
    }& &\text{and} &
&\parbox[c][40pt][t]{55pt}{\centering
    \begin{fmffile}{doubleeye_unlabeled_subgraph3}
    \begin{fmfgraph*}(40,40)
\end{fmfgraph*}
    \end{fmffile}
    },
\intertext{represented as the sets of sets, }
&\texttt{\{\{1,2\}\}},& &\texttt{\{\{3,4\}\}}& &\text{and}& &\texttt{\{\{1,2\},\{3,4\}\}}.
\end{align*}
The subgraphs are split into their connected components and 
every connected component is given as a set of edge references.

\subsection{Output of cographs}
Called with the \textbf{-c} option, 
\textbf{feyncop} also outputs the 
cographs, obtained by contracting the corresponding subgraphs in
the original graph, for the reduced coproduct. 
With this option the output is a triple 
with the original graph, the subgraphs as described above 
and with the corresponding 
cograph in the \texttt{G}-format. 
The contracted edges of the cographs are not removed from the original edge
list, but replaced by the dummy edge \texttt{[-1,-1]} 
to simplify reference to the edges by index.
The output will be of the form 
\begin{code}
D[ ( original graph ), [ 
[ { { 1. subgraph's 1. connected 
    component's edges }, 
{ 1. subgraph's 2. connected 
    component's edges }, 
... 
}, 
 ( cograph to 1. subgraph ) ], 
[ { { 2. subgraph's 1. connected 
    component's edges }, 
    ... }, 
 ( cograph to 2. subgraph ) ],
... 
] ].
\end{code}

\paragraph{Example}
With $\parbox[c][20pt][t]{22.5pt}{\centering
    \begin{fmffile}{doubleeye_unlabeled}
    \begin{fmfgraph}(20,20)
\end{fmfgraph}
    \end{fmffile}
    }$
as input, 
\begin{code}
#\$# echo "G[[1,0],[2,0],[2,0],[3,1],[3,1],[3,2],
[4,0],[5,1],[6,2],[7,3]]" | ./feyncop -D4 -c
\end{code}
the following outcome 
will be produced:
\begin{code}
+ D[G[[1,0],[2,0],[2,0],[3,1],[3,1],[3,2],
    [4,0],[5,1],[6,2],[7,3]],[
[{{1,2}},
    G[[1,0],[-1,-1],[-1,-1],[3,1],[3,1],
    [3,0],[4,0],[5,1],[6,0],[7,3]]], 
[{{3,4}},
    G[[1,0],[2,0],[2,0],[-1,-1],[-1,-1],
    [1,2],[4,0],[5,1],[6,2],[7,1]]], 
[{{1,2},{3,4}},
    G[[1,0],[-1,-1],[-1,-1],[-1,-1],[-1,-1],
    [1,0],[4,0],[5,1],[6,0],[7,1]]]]
]
;
\end{code}
The cographs, with vertex and edge labeling, corresponding to the subgraphs above
\begin{align*}
&\parbox[c][50pt][c]{60pt}{\centering
    \begin{fmffile}{doubleeye_unlabeled_resgraph1}
    \begin{fmfgraph*}(40,40)
\end{fmfgraph*}
    \end{fmffile}
    },
\parbox[c][50pt][c]{60pt}{\centering
    \begin{fmffile}{doubleeye_unlabeled_resgraph2}
    \begin{fmfgraph*}(40,40)
\end{fmfgraph*}
    \end{fmffile}
    } ~~\text{and}~~
\parbox[c][50pt][c]{60pt}{\centering
    \begin{fmffile}{doubleeye_unlabeled_resgraph3}
    \begin{fmfgraph*}(40,40)
\end{fmfgraph*}
    \end{fmffile}
    },
\end{align*}
are represented in the \texttt{G}-format as
\begin{code}
G[[1,0],[-1,-1],[-1,-1],[3,1],[3,1],[3,0],
    [4,0],[5,1],[6,0],[7,3]]
,
G[[1,0],[2,0],[2,0],[-1,-1],[-1,-1],[1,2],
    [4,0],[5,1],[6,2],[7,1]] 
#\normalfont{and}#
G[[1,0],[-1,-1],[-1,-1],[-1,-1],[-1,-1],[1,0],
    [4,0],[5,1],[6,0],[7,1]].
\end{code}
In the course of the calculation of the cographs, vertices are removed. Therefore, the vertex labeling of the
cographs is not compatible with the one of the original graph. 
Whenever an edge is contracted and 
two vertices are merged, the new vertex will carry the smaller label.

\subsection{Output of tensor products}
Called with the option \textbf{-u}, \textbf{feyncop} 
identifies the subgraphs and cographs 
with unlabeled graphs, groups them in tensor 
products and sums tensor products of similar form. 
Called with this option \textbf{feyncop} 
will handle all graphs as non-leg-fixed graphs. Giving 
leg-fixed graphs as input will result in less terms 
with higher factors than expected.
The tensor products are given as pairs 
of a product of superficially divergent connected components of 
the subgraphs and the cograph. The pairs are preceded 
by an \texttt{T} to indicate the tensor product 
type of output. The output is a sum 
of these tensor products.

The output of a single tensor product will be of the form 
\begin{code}
( factor ) * T[ ( product of subgraphs ), 
    ( cograph ) ].
\end{code}

\paragraph{Example}
Giving $\parbox[c][20pt][t]{22.5pt}{\centering
    \begin{fmffile}{doubleeye_unlabeled}
    \begin{fmfgraph}(20,20)
\end{fmfgraph}
    \end{fmffile}
    }$
 as input,
\begin{code}
#\$# echo "G[[1,0],[2,0],[2,0],[3,1],[3,1],[3,2],
[4,0],[5,1],[6,2],[7,3]]" | ./feyncop -D4 -u
\end{code}
the output will be:
\begin{code}
+ 2/1 * T[ G[[1,0],[1,0],[2,0],
    [3,0],[4,1],[5,1]], 
G[[1,0],[1,0],[2,0],[2,1],
    [3,2],[4,2],[5,0],[6,1]] ]
+ T[ (G[[1,0],[1,0],[2,0],[3,0],
    [4,1],[5,1]])^2,
G[[1,0],[1,0],[2,0],[3,0],[4,1],[5,1]] ]
;
\end{code}
This output corresponds to the 
reduced coproduct calculation, 
\begin{align*}
\widetilde\Delta_4 \left(
    \parbox[c][20pt][t]{22.5pt}{\centering
    \begin{fmffile}{doubleeye_mini_cop1}
    \begin{fmfgraph}(20,20)
\end{fmfgraph}
    \end{fmffile}
    }\right)
   =2
    \parbox[c][20pt][t]{13.5pt}{\centering
    \begin{fmffile}{doubleeye_mini_cop2}
    \begin{fmfgraph}(10,20)
\end{fmfgraph}
    \end{fmffile}
    }
\otimes 
\parbox[c][20pt][t]{22.5pt}{\centering
    \begin{fmffile}{doubleeye_mini_cop3}
    \begin{fmfgraph}(20,20)
\end{fmfgraph}
    \end{fmffile}
    }
 +  \left(
    \parbox[c][20pt][t]{13.5pt}{\centering
    \begin{fmffile}{doubleeye_mini_cop2}
    \begin{fmfgraph}(10,20)
\end{fmfgraph}
    \end{fmffile}
    }
\right)^2
\otimes
    \parbox[c][20pt][t]{13.5pt}{\centering
    \begin{fmffile}{doubleeye_mini_cop2}
    \begin{fmfgraph}(10,20)
\end{fmfgraph}
    \end{fmffile}
    },
\end{align*}
where,  
    $
    2\parbox[c][20pt][t]{13.5pt}{\centering
    \begin{fmffile}{doubleeye_mini_cop2}
    \begin{fmfgraph}(10,20)
\end{fmfgraph}
    \end{fmffile}
    }
\otimes 
\parbox[c][20pt][t]{22.5pt}{\centering
    \begin{fmffile}{doubleeye_mini_cop3}
    \begin{fmfgraph}(20,20)
\end{fmfgraph}
    \end{fmffile}
    },
$
is represented as 
\begin{code}
2/1 * T[ G[[1,0],[1,0],[2,0],
    [3,0],[4,1],[5,1]], 
G[[1,0],[1,0],[2,0],[2,1],
    [3,2],[4,2],[5,0],[6,1]] ]
\end{code}
and
$
\left(
    \parbox[c][20pt][t]{13.5pt}{\centering
    \begin{fmffile}{doubleeye_mini_cop2}
    \begin{fmfgraph}(10,20)
\end{fmfgraph}
    \end{fmffile}
    }
\right)^2
\otimes
    \parbox[c][20pt][t]{13.5pt}{\centering
    \begin{fmffile}{doubleeye_mini_cop2}
    \begin{fmfgraph}(10,20)
\end{fmfgraph}
    \end{fmffile}
    },
$
is denoted as
\begin{code}
T[ (G[[1,0],[1,0],[2,0],[3,0],[4,1],[5,1]])^2, 
    G[[1,0],[1,0],[2,0],[3,0],[4,1],[5,1]] ].
\end{code}
Note that the labelings of the vertices of the subgraphs and cographs 
in this mode of \textbf{feyncop} carry no resemblance to the vertex labeling of the input graph. 
The labelings are chosen using \textbf{nauty}'s
canonical labeling algorithm. Therefore, the graphs are 
representatives of the corresponding graph isomorphism class.
\subsection{Usage in conjunction with \textbf{feyngen}}
\textbf{feyncop} can be used in conjunction with \textbf{feyngen}. 
The coproduct of every input graph is computed and the sum of 
the coproducts in a format depending on the options is given as output. 
To do this the output of \textbf{feyngen} must be piped into \textbf{feyncop} as input.
For instance,
\begin{code}
#\$# ./feyngen 2 -j2 -k4 -pu | ./feyncop -D4
\end{code}
will yield the sum of the reduced coproducts, given in the \texttt{D}-format,
of all non-leg-fixed two loop, $\varphi^4$, 1PI graphs with two external legs 
weighted by the symmetry factor of the original graph,
\begin{code}
+ 1/8 * D[G[[1,0],[1,0],[1,1],[2,0],[3,0]],
[{{2}}, {{0,1}}]]
+ 1/12 * D[G[[1,0],[1,0],[1,0],[2,0],[3,1]],
[{{0,1}}, {{0,2}}, {{1,2}}]]
;
\end{code}
where the relevant graph sum with edge and vertex labels, of which the 
coproduct is calculated, is
\begin{align*}
\frac18
\left(
\parbox[c][50pt][c]{55pt}{\centering
     \begin{fmffile}{double_tadpole_fullexample}
        \begin{fmfgraph*}(40,40)
\end{fmfgraph*}
        \end{fmffile}}
\right)        +
\frac1{12}
\left(
\parbox[c][25pt][t]{70pt}{\centering
     \begin{fmffile}{triple_eye_fullexample}
        \begin{fmfgraph*}(50,25)
\end{fmfgraph*}
        \end{fmffile}}
\right)
\end{align*}
On the other hand, 
\begin{code}
#\$# ./feyngen 2 -j2 -k4 -pu | ./feyncop -D4 -u
\end{code}
will yield the sum of tensor products of unlabeled graphs 
corresponding to the reduced coproduct applied to the sum of 
all two loop, $\varphi^4$, 1PI graphs with two external legs weighted 
by their symmetry factor:
\begin{code} 
phi4_j2_h2_nlf_red_cop_unlab :=
+ 1/8 * T[ G[[0,0],[1,0],[2,0]], 
    G[[0,0],[1,0],[2,0]] ]
+ 3/8 * T[ G[[1,0],[1,0],[2,0],
    [3,0],[4,1],[5,1]], 
G[[0,0],[1,0],[2,0]] ]
;
\end{code}
This corresponds to the calculation 
\begin{align*}
\widetilde\Delta_4 &\left( 
       \frac18
        \parbox[c][20pt][t]{30pt}{\centering
     \begin{fmffile}{double_tadpole2_fullexample}
        \begin{fmfgraph}(20,20)
\end{fmfgraph}
        \end{fmffile}}
+ 
\frac1{12}
\parbox[c][20pt][t]{40pt}{\centering
     \begin{fmffile}{triple_eye2_fullexample}
        \begin{fmfgraph}(30,20)
\end{fmfgraph}
        \end{fmffile}}
        \right) = \\
    &= \frac18 ~~
    \parbox[c][20pt][t]{20pt}{\centering
    \begin{fmffile}{tadpole_fullexample}
    \begin{fmfgraph}(20,20)
\end{fmfgraph}
    \end{fmffile}
    }
\otimes 
\parbox[c][20pt][t]{20pt}{\centering
    \begin{fmffile}{tadpole_fullexample}
    \begin{fmfgraph}(20,20)
\end{fmfgraph}
    \end{fmffile}
    }
+
\frac38 ~~
\parbox[c][20pt][t]{20pt}{\centering
        \begin{fmffile}{eye_fullexample}
        \begin{fmfgraph}(20,20)
\end{fmfgraph}
        \end{fmffile}}
\otimes
\parbox[c][20pt][t]{20pt}{\centering
    \begin{fmffile}{tadpole_fullexample}
    \begin{fmfgraph}(20,20)
\end{fmfgraph}
    \end{fmffile}
    },
\end{align*}
which can be validated using the 
corollary \eqref{coro:sum_formula} as
shown in the example in section \ref{sec:finally_sum_formula}. 
\subsection{Filtering for primitive graphs}
\textbf{feyncop} has the ability 
to filter the input graphs for primitive ones. 
This behaviour is triggered by the \textbf{-p} 
option. 
A convenient usage pattern is to pipe the output of 
\textbf{feyngen}  
into \textbf{feyncop} to obtain a 
set of primitive graphs with the desired properties. 

\paragraph{Example}
If all primitive, QED, vertex diagrams with 
three loops are desired, the call to \textbf{feyngen},
\begin{code}
#\$# ./feyngen 3 --qed -b1 -f2 -p
\end{code}
to generate the $100$ not necessarily primitive QED diagrams 
is needed. 
To filter these diagrams for primitive ones, 
they can be piped into \textbf{feyncop}:
\begin{code}
#\$# ./feyngen 3 --qed -b1 -f2 -p | ./feyncop -p
\end{code}
This will result in the output
\begin{code}
qed_f2_b1_h3_proj_to_prim :=
+ G[[1,4,f],[2,3,f],[3,6,f],[4,5,f],
    [5,0,f],[6,1,f],[4,3,A],[5,2,A],
    [6,0,A],[0,7,f],[8,2,f],[9,1,A]]
+ G[[1,4,f],[2,3,f],[3,6,f],[4,5,f],
    [5,0,f],[6,1,f],[4,2,A],[5,3,A],
    [6,0,A],[0,7,f],[8,2,f],[9,1,A]]
+ G[[1,2,f],[2,6,f],[3,4,f],[4,5,f],
    [5,1,f],[6,0,f],[3,2,A],[4,0,A],
    [6,5,A],[0,7,f],[8,3,f],[9,1,A]]
+ G[[1,6,f],[2,5,f],[3,4,f],[4,1,f],
    [5,0,f],[6,2,f],[3,2,A],[5,4,A],
    [6,0,A],[0,7,f],[8,3,f],[9,1,A]]
+ G[[1,6,f],[2,4,f],[3,2,f],[4,5,f],
    [5,1,f],[6,0,f],[4,0,A],[5,3,A],
    [6,2,A],[0,7,f],[8,3,f],[9,1,A]]
+ G[[1,6,f],[2,5,f],[3,4,f],[4,0,f],
    [5,1,f],[6,3,f],[4,2,A],[5,3,A],
    [6,0,A],[0,7,f],[8,2,f],[9,1,A]]
+ G[[1,4,f],[2,6,f],[3,5,f],[4,0,f],
    [5,1,f],[6,3,f],[4,3,A],[5,2,A],
    [6,0,A],[0,7,f],[8,2,f],[9,1,A]]
;
\end{code}
corresponding to the sum of the seven primitive, three loop, vertex diagrams in QED:

\begin{align*}
   &\tiny \parbox[c][45pt][c]{50pt}{\centering
   \begin{fmffile}{qed_1pi_3l_prim1}
    \begin{fmfgraph*}(35,35)
\end{fmfgraph*}
    \end{fmffile}} +~ 
   \parbox[c][45pt][c]{50pt}{\centering
   \begin{fmffile}{qed_1pi_3l_prim2}
    \begin{fmfgraph*}(35,35)
\end{fmfgraph*}
    \end{fmffile}} +~ 
   \parbox[c][45pt][c]{50pt}{\centering
   \begin{fmffile}{qed_1pi_3l_prim3}
    \begin{fmfgraph*}(35,35)
\end{fmfgraph*}
    \end{fmffile}} +\\
  &+~ \tiny\parbox[c][45pt][c]{50pt}{\centering
   \begin{fmffile}{qed_1pi_3l_prim4}
    \begin{fmfgraph*}(35,35)
\end{fmfgraph*}
   \end{fmffile}} 
 +~\parbox[c][45pt][c]{50pt}{\centering
   \begin{fmffile}{qed_1pi_3l_prim5}
    \begin{fmfgraph*}(35,35)
\end{fmfgraph*}
    \end{fmffile}} +\\
&+~ \tiny\parbox[c][45pt][c]{50pt}{\centering
   \begin{fmffile}{qed_1pi_3l_prim6}
    \begin{fmfgraph*}(35,35)
\end{fmfgraph*}
   \end{fmffile}} 
+~ \parbox[c][45pt][c]{50pt}{\centering
   \begin{fmffile}{qed_1pi_3l_prim7}
    \begin{fmfgraph*}(35,35)
\end{fmfgraph*}
    \end{fmffile}}.
\end{align*}

\section{Conclusion and future prospects}

The program \textbf{feyngen} to generate Feynman graphs of large loop 
orders and the program \textbf{feyncop} to calculate the 
coproduct of given Feynman graphs were presented and validated. 

In a future work, 
\textbf{feyngen} and \textbf{feyncop} could 
be expanded to handle graphs of further types 
of quantum field theories as for instance 
spontaneously 
broken theories. 
The performance of \textbf{feyngen} to generate 
graphs with external legs, QED and Yang-Mills graphs could 
still be increased significantly by taking 
the automorphism groups of the graphs into account 
during the graph generation. 
The implementation relies on Python and it is 
not optimized at all besides the usage of fast 
canonical labeling in the form of the \textbf{nauty} package.

From a larger perspective, the next step
will be to implement a program that evaluates 
the renormalized amplitudes of 1PI graphs from the 
coproduct computed by \textbf{feyncop}
by parametric integration techniques. 
First steps in this endeavour are
pursued in \cite{Panzer:2013cha}.

\section*{Acknowledgments}
I wish to express my grateful thanks to Dirk Kreimer for his great supervision, support and encouragement to work out these results and programs. 
I am also in debt to Oliver Schnetz who provided great advice and steady assistance. He also introduced me to nauty in the first place.

\appendix
\section{Graph theoretic concepts}
\label{sec:graph_basic}
For the graph generation and the calculation of the coproduct, 
the graph theoretical tool \textbf{nauty}, described in \cite{McKay81practicalgraph}, is used. 
Therefore, some definitions of basic graph theoretical terms are required. Here, the focus is on the 
transition from the treatment of graphs in combinatorics and graph theory to a quantum field 
theoretic context. On the graph theoretical side, the definitions are 
based on \cite{bollobas}, whereas the quantum field theoretical view point is 
based on \cite{Manchon} and \cite{KreimerSarsvanSuijlekom}, which can be consulted for a more detailed discussion of 
the following notions. The central objects of 
perturbative QFT are Feynman graphs. These are special cases of multigraphs. 
\subsection{Multigraphs}
\begin{mydef}%[Multiset]
A multiset is a generalization of the notion of a set in which elements are allowed to appear more than once.
\end{mydef}
\begin{mydef}%[Multigraph]
\label{def_graph}
A multigraph $G$ is a pair 
$\left( V, E \right)$, 
where $V$ denotes the vertex set of finite cardinality and $E$ 
is a multiset, the edge multiset, with elements $e \in V \times V$.
\end{mydef}
\paragraph{Adjacency and incidence}
For any edge $e = (v_1,v_2) \in E$ the vertices $v_1$ and $v_2$ are called adjacent to each other and incident to the edge $e$.
\paragraph{Self-loop}
An edge $(v,v)$ incident to only one vertex is called a self-loop.
\paragraph{Valency}
The number of incident edges to a vertex $v \in V$ is called the valency of $v$. 
Self-loops incident to $v$ are counted twice.
\paragraph{Simple graph}
A multigraph $G=(V,E)$ is a simple graph if it has no self-loops and every edge is only present once in $E$.
\paragraph{Directed, undirected and mixed graphs}
Depending on whether the elements of the edge multiset 
$e\in E$ are ordered or unordered pairs of vertices, 
the multigraph is called directed or undirected. 
A multigraph may also contain directed and undirected edges.
\paragraph{Isomorphism}
Two graphs $G=(V,E)$ and $G'=(V',E')$ are considered as isomorphic, $G \simeq G'$, if there is a bijection 
$\phi: G \rightarrow	G'$ 
that maps the vertex set $V$ onto $V'$ and the edge multiset $E$ onto  
$E'$, such that adjacency is respected:
\begin{align*}
\phi(e) &= (\phi(v_1), \phi(v_2)) & & \forall e \in E \text{ with } (v_1,v_2) = e.
\end{align*}
$\phi$ is called an isomorphism between $G$ and $G'$ .
\paragraph{Automorphism group}
The set of all isomorphisms of a graph onto itself $\phi: G \rightarrow G$ is the automorphism group, 
$\text{Aut}(G)$, of $G$.
The inverse of the cardinality of the automorphism group is called the symmetry factor of the graph.

Handling Feynman graphs, a small pathology arises in connection with self-loops. 
For the notions of isomorphism and the automorphism group unoriented self-loops are to be considered as 
edges consisting of two half-edges which can be permuted freely. 
The overall effect of this additional freedom is that every unoriented self-loop
is assigned an additional symmetry generator of the automorphism group of order two.
This generator commutes with all other elements of $\text{Aut}(G)$.
\paragraph{Subgraphs}
A graph $G' = (V',E')$ is a subgraph of $G = (V, E)$ if $E' \subseteq E$ and $V' \subseteq V$.
This relation is denoted as $G' \subseteq G$.
\paragraph{Connectedness}
A multigraph $G$ is disconnected if the vertex set 
and the edge multiset can be split into disjoint sets and multisets $V = V_1 \cup V_2$ and
$E = E_1 \cup E_2$, such that 
$g_1 = \left( V_1, E_1 \right)$ and $g_2 = \left( V_2, E_2 \right)$ are again multigraphs in the sense of Definition \ref{def_graph}. 
Implicitly, this statement is also expressed as $G$ being a disjoint union of the graphs $g_1$ and $g_2$: $G = g_1 \cup g_2$.

Applying this procedure successively, eventually yields 
a set of multigraphs, $\left\{ G_1, \ldots, G_n\right\}$, 
which are connected. These are subgraphs of $G=\bigcup \limits_i G_i$ 
and are called the connected components of $G$.
\paragraph{Loop number or first Betti number}
The loop number or first Betti number $h_1(G)$ is the number of
independent loops or cycles of $G$.

The identity
\begin{align}
\label{eqn:betti}
h_1(G) = |E| - |V| + f_G
\end{align}
is valid for every graph,
where $|E|$ and $|V|$ are the appropriate 
cardinalities of the vertex sets and edge multisets of $G$ 
and $f_G$ denotes the number of connected components of $G$.
\section*{References}

\bibliographystyle{elsarticle-num}
\bibliography{literature}

\begin{thebibliography}{10}
\expandafter\ifx\csname url\endcsname\relax
  \def\url#1{\texttt{#1}}\fi
\expandafter\ifx\csname urlprefix\endcsname\relax\def\urlprefix{URL }\fi
\expandafter\ifx\csname href\endcsname\relax
  \def\href#1#2{#2} \def\path#1{#1}\fi

\bibitem{BrownKreimer}
F.~Brown, D.~Kreimer, Angles, scales and parametric renormalization, Letters in
  Mathematical Physics 103~(9) (2013) 933--1007.

\bibitem{KreimerSarsvanSuijlekom}
D.~Kreimer, M.~Sars, W.~D. van Suijlekom, Quantization of gauge fields, graph
  polynomials and graph homology, Annals of Physics 336~(0) (2013) 180 -- 222.

\bibitem{browntwopoint}
F.~Brown, The massless higher-loop two-point function, Communications in
  Mathematical Physics 287~(3) (2009) 925--958.

\bibitem{Nogueira1993279}
P.~Nogueira, Automatic feynman graph generation, Journal of Computational
  Physics 105~(2) (1993) 279 -- 289.

\bibitem{McKay81practicalgraph}
B.~D. McKay, Practical graph isomorphism, in: 10th. Manitoba Conference on
  Numerical Mathematics and Computing; Congressus Numerantium, 30, Department
  of Computer Science, Vanderbilt University, 1981, pp. 45--87.

\bibitem{suijlekom2007ren}
W.~D. van Suijlekom, Renormalization of gauge fields: A hopf algebra approach,
  Communications in Mathematical Physics 276~(3) (2007) 773--798.

\bibitem{Manchon}
D.~Manchon, Hopf algebras, from basics to applications to renormalisation,
  rencontres mathématiques de glanon, arXiv.org: math/0408405v2.

\bibitem{connes2000renormalization}
A.~Connes, D.~Kreimer, Renormalization in quantum field theory and the
  riemann--hilbert problem i: The hopf algebra structure of graphs and the main
  theorem, Communications in Mathematical Physics 210~(1) (2000) 249--273.

\bibitem{hopf_feynman}
K.~Ebrahimi-Fard, D.~Kreimer, The hopf algebra approach to feynman diagram
  calculations, Journal of Physics A: Mathematical and General 38~(50) (2005)
  R385.

\bibitem{mckay1998isomorph}
B.~D. McKay, Isomorph-free exhaustive generation, Journal of Algorithms 26~(2)
  (1998) 306--324.

\bibitem{Itzykson}
C.~Itzykson, J.-B. Zuber, Quantum field theory, International series in pure
  and applied physics, McGraw-Hill, New York, NY, 1980, also a reprint ed.:
  Mineola, Dover, 2005.

\bibitem{Panzer:2013cha}
E.~Panzer, {On the analytic computation of massless propagators in dimensional
  regularization}, arXiv.org: hep-th/1305.2161.

\bibitem{bollobas}
B.~Bollob{\'a}s, Modern Graph Theory, Graduate Texts in Mathematics,
  Springer-Verlag GmbH, 1998.

\end{thebibliography}


\begin{thebibliography}{0}
\bibitem{1} McKay, B.D., {\em Practical Graph Isomorphism}, Congressus Numerantium, 30 (1981) 45-87
\bibitem{2} Connes and D. Kreimer, {\em Renormalization in quantum field theory and the Riemann–Hilbert problem I. The Hopf algebra structure of graphs and the main theorem} , Commun. Math. Phys. 210 (2000) 249–273.
 
\end{thebibliography}

\end{document}